\renewcommand\i{\infty}
\newcommand\ve{\varepsilon}
\newcommand\T{0\le t\le T}
\newcommand\I{\mathbbm{1}}
\newcommand\p{\mathbb{P}}
\newcommand\E{\mathbb{E}}
\newtheorem{theorem}{Theorem}[section]
\newtheorem{corollary}[theorem]{Corollary}
\newtheorem{definition}[theorem]{Definition}
\newtheorem{lemma}[theorem]{Lemma}
\newtheorem{proposition}[theorem]{Proposition}
\theoremstyle{definition}
\newtheorem{remark}[theorem]{Remark}
\numberwithin{equation}{section}
\begin{document}

\title[Growth-optimal portfolio under transaction costs]{The dual optimizer for the growth-optimal portfolio under transaction costs}
\author{Stefan Gerhold}
\address{Institute for Mathematical Methods in Economics \endgraf Vienna University of Technology \endgraf Wiedner Hauptstrasse 8-10, A-1040 Wien, Austria}
\email{sgerhold@fam.tuwien.ac.at}
\author{Johannes Muhle-Karbe}
\address{Departement f\"ur Mathematik, ETH Z\"urich\endgraf R\"amistrasse 101, CH-8092 Z\"urich, Switzerland}
\email{johannes.muhle-karbe@math.ethz.ch}
\author{Walter Schachermayer}
\address{Fakult\"at f\"ur Mathematik, Universit\"at Wien \endgraf Nordbergstrasse 15, A-1090 Wien, Austria}
\email{walter.schachermayer@univie.ac.at}
\date{\today}

\thanks{We sincerely thank Paolo Guasoni, Jan Kallsen, and Mikl\'os R{\'a}sonyi for helpful discussions, in particular during the early stages of the project.}
\keywords{Transaction costs, growth-optimal portfolio, shadow price}
\subjclass[2000]{91B28, 91B16, 60H10}

\begin{abstract}
We consider the maximization of the long-term growth rate in the Black-Scholes model under proportional transaction costs as in Taksar, Klass and Assaf [{\em Math.\ Oper.\ Res.}~13, 1988]. Similarly as in Kallsen and Muhle-Karbe [\emph{Ann.\ Appl.\ Probab.,} 20, 2010] for optimal consumption over an infinite horizon, we tackle this problem by determining a \emph{shadow price}, which is the solution of the dual problem. It can be calculated explicitly up to determining the root of a deterministic function. This in turn allows to explicitly compute fractional Taylor expansions, both for the no-trade region of the optimal strategy and for the optimal growth rate.
\end{abstract}

\maketitle

\section{Introduction}
Portfolio optimization is a classical example of an infinite-dimensional concave optimization problem. The first ingredient is a probabilistic model of a financial market, e.g., the Black-Scholes model consisting of a \emph{bond}  modelled as
\begin{equation}\label{eq:bond}
S^0_t=\exp(rt)
\end{equation}
and a \emph{stock} modelled as
\begin{equation}\label{eq:stock}
S_t=S_0 \exp\left(\sigma W_t+\left(\mu-\frac{\sigma^2}{2}\right)t\right).
\end{equation}
Here~$W$ is a standard Brownian motion and~$r$, $\mu$ as well as $\sigma, S_0>0$ denote constants. In the sequel, we focus on the Black-Scholes model and assume (without loss of generality for the present purposes) that $S_0=1$, $r=0$ and $\mu>0$.

In order to model the preferences of an economic agent, the second ingredient is a utility function $U: \mathbb{R}_+ \to \mathbb{R} \cup \{-\infty\}$. In the present paper we will deal with the most tractable specification, namely logarithmic utility
\begin{equation*}\label{eq:log}
U(x)=\log(x).
\end{equation*}

The third ingredient is an initial endowment of~$x$ units of bonds, as well as a time horizon $T \in (0,\infty]$.  

There are essentially two versions of the portfolio optimization problem. 

The first version consists of maximizing the \emph{expected utility from consumption}, which is typically formulated for an infinite horizon:
\begin{equation}\label{eq:consumption}
\mathbb{E}\left[\int_0^\infty e^{-\rho t}U(c_t)dt\right] \to \max!
\end{equation}
Here, $\rho>0$ is a discount factor pertaining to the impatience of the investor and $(c_t)_{t \ge 0}$ runs through all positive consumption plans which can be financed by the initial endowment~$x>0$ and subsequent trading in the stock~$S$. In Merton's seminal paper~\cite{merton.69}, it is shown that -- in the Black-Scholes model and for the case of logarithmic or power utility -- there are two constants $\pi,c$, depending on the model parameters, such that the optimal strategy consists of investing a fraction~$\pi$ of the current wealth into the stock and consuming with an intensity which is a fraction~$c$ of the current wealth.

The second version of the portfolio optimization problem is to choose a time horizon~$T$ and to maximize \emph{expected utility from terminal wealth}:
\begin{equation}\label{eq:terminal}
\mathbb{E}\left[U\Big(x+\int_0^T \varphi_t dS_t\Big) \right] \to \max!
\end{equation}
Here we maximize over all predictable processes $\varphi=(\varphi_t)_{t \ge 0}$ describing the number of stocks which the agent holds at time~$t$. We only consider those strategies~$\varphi$ which are \emph{admissible}, i.e.\ lead to a nonnegative wealth process $(x+\int_0^t \varphi_udS_u)_{0 \le t \le T}$. Again, it turns out that -- for the Black-Scholes model~\eqref{eq:stock} and logarithmic or power utility -- the optimal strategy is to keep the proportion~$\pi_t$ between wealth invested in the stock and total wealth constant. In particular, for logarithmic utility, this \emph{Merton rule} reads as
\begin{equation}\label{eq:merton}
\pi_t=\frac{\varphi_t S_t}{\varphi^0_t+\varphi_t S_t}=\frac{\mu}{\sigma^2}.
\end{equation}
Here, $(\varphi^0_t)_{0 \le t \le T}$  and $(\varphi_t)_{0 \le t \le T}$ denote the the holdings in bond and stock, respectively, which are related via the self-financing condition that no funds are added or withdrawn. In fact,~\eqref{eq:merton} holds true much more generally; e.g., for It\^o processes with -- say -- bounded coefficients one just has to replace~$\mu$ and~$\sigma$ with the drift coefficient~$\mu_t$ resp.\ the diffusion coefficient~$\sigma_t$ (cf.\ e.g.~\cite[Example~6.4]{karatzas.al.91}). This particular tractability of the $\log$-utility maximization problem is a fact which we are going to exploit later on.   

\medskip

We now pass to the theme of the present paper, which is portfolio optimization under (small) \emph{transaction costs}. To this end, we now assume that~\eqref{eq:stock} defines the \emph{ask price} of the stock, while the corresponding \emph{bid price} is supposed to be given by $(1-\lambda)S$ for some constant $\lambda \in (0,1)$. This means that one has to pay the higher price~$S_t$ when purchasing the stock at time~$t$, but only receives the lower price $(1-\lambda)S_t$ when selling it.\footnote{This notation, also used in~\cite{taksar.al.88}, turns out to be convenient in the sequel. It is equivalent to the usual setup with the same constant proportional transaction costs for purchases and sales (compare e.g.~\cite{davis.norman.90, janecek.shreve.04, shreve.soner.94}). Indeed, set $\check{S}=\frac{2-\lambda}{2}S$ and $\check{\lambda}=\frac{\lambda}{2-\lambda}$. Then $((1-\lambda)S,S)$ coincides with $((1-\check{\lambda})\check{S},(1+\check{\lambda})\check{S})$. Conversely, any bid-ask process $((1-\check{\lambda})\check{S},(1+\check{\lambda})\check{S})$ with $\check{\lambda} \in (0,1)$ equals $((1-\lambda)S,S)$ for $S=(1+\check{\lambda})\check{S}$ and $\lambda=\frac{2\check{\lambda}}{1+\check{\lambda}}$.} Since transactions of infinite variation lead to immediate bankruptcy, we confine ourselves to the following set of trading strategies.

\begin{definition}
A \emph{trading strategy} is an $\mathbb{R}^2$-valued predictable finite variation process $ (\varphi^0, \varphi)=(\varphi^0_t,\varphi_t)_{t \ge 0}$, where $(\varphi^0_{0-}, \varphi_{0-})=(x,0)$ represents the initial endowment in bonds\footnote{This assumption is made mainly for notational convenience. An extension to general initial endowments is straightforward.} and $\varphi_t^0, \varphi_t$ denote the number of shares held in the bank account and in stock at time~$t$, respectively. 
\end{definition}

To capture the notion of a self-financing strategy, we use the intuition that no funds are added or withdrawn. To this end, we write the second component~$\varphi$ of a strategy $(\varphi^0,\varphi)$ as the difference $\varphi=\varphi^{\uparrow}-\varphi^{\downarrow}$ of two increasing processes~$\varphi^{\uparrow}$ and~$\varphi^{\downarrow}$ which do not grow at the same time. The proceeds of selling stock must be added to the bank account while the expenses from the purchase of stock have to be deducted from the bank account in
any infinitesimal period $(t -dt, t]$, i.e., we require
\begin{equation*}\label{e:selff1}
d\varphi^0_t=(1-\lambda) S_{t}d\varphi^{\downarrow}_t -  S_{t}d\varphi^{\uparrow}_t.
\end{equation*}
Written in integral terms this amounts to the following notion.

\begin{definition}\label{defi:selffinancing}
A trading strategy $(\varphi^0,\varphi)_{t \ge 0}$ is called \emph{self-financing}, if
\begin{equation}\label{e:selff2}
\varphi^0=\varphi^0_{0-}+\int_0^\cdot (1-\lambda) S_{t} d\varphi^{\downarrow}_t - \int_0^\cdot S_{t} d\varphi^{\uparrow}_t,
\end{equation}
where $\varphi=\varphi^{\uparrow}-\varphi^{\downarrow}$ for increasing predictable processes $\varphi^{\uparrow},\varphi^{\downarrow}$ which do not grow at the same time.
\end{definition}

Note that since~$S$ is continuous and~$\varphi$ is of finite variation, integration by parts yields that this definition coincides with the usual notion of self-financing strategies in the absence of transaction costs if we let $\lambda=0$.

The subsequent definition requires the investor to be solvent at all times. For frictionless markets, i.e.\ if $\lambda=0$, this coincides with the usual notion of admissibility.

\begin{definition}\label{def:opti}
A self-financing trading strategy $(\varphi^0,\varphi)_{t \ge 0}$ is called \emph{admissible}, if its \emph{liquidation wealth process} 
$$V_t(\varphi^0,\varphi):=\varphi_t^0+\varphi_t^+(1-\lambda)S_t-\varphi_t^-S_t, \quad t\geq 0, $$
is a.s.\ nonnegative.  
\end{definition}

Utility maximization problems under transaction costs have been studied extensively. In the influential paper~\cite{davis.norman.90}, Davis and Norman identify the solution to the infinite-horizon consumption problem~\eqref{eq:consumption} (compare also~\cite{janecek.shreve.04,shreve.soner.94}). Transaction costs make it unfeasible to keep a fixed proportion of wealth invested into stocks, as this would involve an infinite variation of the trading strategy. Instead, it turns out to be optimal to keep the fraction~$\pi_t$ of wealth in stocks in terms of the ask price~$S_t$ inside some interval. Put differently, the investor refrains from trading until the proportion of wealth in stocks leaves a \emph{no-trading region}. The boundaries of this no-trade region are not known explicitly, but can be determined numerically by solving a free boundary problem.

Liu and Loewenstein~\cite{liu.loewenstein.02} approximate the finite horizon problem by problems with a random horizon, which turn out to be more tractable. Dai and Yi~\cite{dai.yi.09} solve the finite-horizon problem by characterizing the time-dependent boundaries of the no-trade region as the solution to a double-obstacle problem, where the ODE of~\cite{davis.norman.90} is replaced by a suitable PDE. Taksar et al.~\cite{taksar.al.88} consider the long-run limit of the finite horizon problem, i.e.\ the maximization of the portfolio's asymptotic logarithmic growth rate. As in the infinite-horizon consumption problem, this leads to a no transaction region with constant boundaries. However, these boundaries are determined more explicitly as the roots of a deterministic function. Arguing on an informal level, Dumas and Luciano~\cite{dumas.luciano.91} extend this approach to the maximization of the asymptotic power growth rate. To the best of our knowledge, a rigorous proof of this result still seems to be missing in the literature, though.

\medskip
 
{}From now on, we only consider logarithmic utility $U(x)=\log(x)$ and formulate the problem, for given initial endowment $x>0$ in bonds, time horizon $T \in (0,\infty)$  and transaction costs $\lambda \in (0,1)$, in direct analogy to the frictionless case in~\eqref{eq:terminal} above.

 \begin{definition}[$\log$-optimality for horizon~$T$, first version]
 An admissible strategy $(\varphi^0,\varphi)_{0 \le t \le T}$  is called \emph{$\log$-optimal} on $[0,T]$ for the bid-ask process $((1-\lambda)S,S)$, if 
\begin{equation}\label{eq:tacopt1}
\E\left[\log(V_T(\psi^0,\psi))\right] \le \E\left[\log(V_T(\varphi^0,\varphi))\right],
\end{equation}
for all competing admissible strategies $(\psi^0_t,\psi_t)_{0 \le t \le T}$. 
\end{definition}

It turns out that this problem is rather untractable. To see this, consider the special and particularly simple case $\mu=\sigma^2$. In the frictionless case, Merton's rule~\eqref{eq:merton} tells us what the optimal strategy is: At time~$0$, convert the entire initial holdings into stock, i.e., pass from $(\varphi^0_{0-},\varphi_{0-})=(x,0)$ to $(\varphi^0_0,\varphi_0)=(0,x/S_0)=(0,x)$. Then keep all the money in the stock, i.e., $(\varphi^0_t,\varphi_t)=(0,x)$, for the entire period $[0,T]$. At the terminal date~$T$, this provides a logarithmic utility of $\log(xS_T)$, after converting the~$x$ stocks into~$xS_T$ bonds (without paying transaction costs).

Let us now pass to the setting with transaction costs $\lambda>0$. If $\lambda \ll T$, it is, from an economic point of view, rather obvious what constitutes a ``good'' strategy for the optimization problem~\eqref{eq:tacopt1}: Again convert the initial holdings of~$x$ bonds at time~$0$ into stocks and simply hold these stocks until time~$T$ without doing any dynamic trading. Converting the stocks back into bonds at time~$T$, this leads to a logarithmic utility of $\log((1-\lambda)xS_T)=\log(xS_T)+\log(1-\lambda)$. Put differently, the difference to the frictionless case is only the fact that at terminal date~$T$ you \emph{once} have to pay the transaction costs $\lambda>0$.

Now consider the case  $0<T \ll \lambda$. In this situation, the above strategy does not appear to be a ``good'' approach to problem~\eqref{eq:tacopt1} any more. The possible gains of the stock during the (short) interval $[0,T]$ are outweighed by the (larger) transaction costs $\lambda$. Instead, it now seems to be much more appealing to simply keep your position of~$x$ bonds during the interval $[0,T]$ and not to invest into the stock at all. 

These considerations are of course silly from an economic point of view, where only the case $0 \le \lambda \ll T$ is of interest. The economically relevant issue is how the \emph{dynamic trading} during the interval $(0,T)$ is affected when we pass from the frictionless case $\lambda=0$ to the case $\lambda>0$. Paying the transaction costs only \emph{once} at time $t=T$ (resp.\ twice if we also model the transaction costs for the purchase at time $t=0$) can be discarded from an economic point of view, as opposed to the ``many'' trades necessary to manage the portfolio during $(0,T)$ if $\mu \neq \sigma^2$. This economic intuition will be made mathematically precise in Corollary~\ref{cor:notradeasymp} and Proposition~\ref{prop:growthasymp} below, where the leading terms of the relevant Taylor expansions in~$\lambda$ are of the order $\lambda^{1/3}$ and $\lambda^{2/3}$, respectively. The effect of paying transaction costs once, however, is only of order~$\lambda$ (compare Corollary~\ref{cor:tac} below). 

Mathematically speaking, a consequence of the above formulation~\eqref{eq:tacopt1} is the loss of time consistency, which we illustrated above for the special case $\mu=\sigma^2$. For $U(x)=\log(x)$, it follows from Merton's rule \eqref{eq:merton} that the optimal strategy in the problem~\eqref{eq:terminal} without transaction costs does not depend on the time horizon~$T$, i.e., is optimal for all $T>0$. In the presence of transaction costs, this desirable concatenation property does not hold true any more for Problem~\eqref{eq:tacopt1} as we have just seen. There is a straightforward way to remedy  this  nuisance, namely passing to the limit $T \to \infty$. This has been done by Taksar, Klass and Assaf~\cite{taksar.al.88} and in much of the subsequent literature.

\begin{definition}
An admissible strategy $(\varphi^0,\varphi)$ is called \emph{growth-optimal}, if 
$$\limsup_{T \to \infty} \frac{1}{T} \E\left[\log(V_T(\psi^0,\psi))\right] \leq \lim_{T \to \infty} \frac{1}{T} \E\left[\log(V_T(\varphi^0,\varphi))\right] ,$$
for all competing admissible strategies $(\psi_t^0,\psi_t)_{t \ge 0}$.
\end{definition}

Note that the optimal growth rate does not depend on the initial endowment~$x$.  Moreover, the above notion does not yield a unique optimizer. As the notion of growth optimality only pertains to a limiting value, suboptimal behaviour on any compact subinterval of $[0,\infty)$ does not matter as long as one eventually behaves optimally. While the notion of growth optimality allows to get rid of the nuisance of terminal liquidation costs, the non-uniqueness of an optimizer has serious drawbacks. For example, much of the beauty of duality theory, which works nicely when the primal and dual optimizers are unique, is lost.

In order to motivate our final remedy to the ``nuisance problem'' (cf.\ Definition~\ref{def:modified} below), we introduce, as in~\cite{kallsen.muhlekarbe.10}, the concept of a shadow price which will lead us to the notion of a dual optimizer .

\begin{definition}\label{def:shadow}
A \emph{shadow price} for the bid-ask process $((1-\lambda)S,S)$ is a continuous semimartingale $\tilde{S}=(\tilde{S}_t)_{t \ge 0}$ with $\tilde{S}_0=S_0$ and taking values in $[(1-\lambda)S,S]$, such that the $\log$-optimal portfolio $(\varphi_t^0,\varphi_t)_{t \ge 0}$ \emph{for the frictionless market with price process~$\tilde{S}$} exists, is of finite variation and the number of stocks~$\varphi$ only increases (resp.\ decreases) on the set $\{\tilde{S}_t=S_t\} \subset \Omega \times \mathbb{R}_+$ (resp.\ $\{\tilde{S}_t=(1-\lambda)S_t\})$. Put differently,~$\varphi$ is the difference of the increasing predictable processes  $\varphi^{\uparrow}=\int_0^\cdot \I_{\{\tilde{S}_t=S_t\}} d\varphi_t$  and $\varphi^{\downarrow}=-\int_0^\cdot \I_{\{\tilde{S}_t=(1-\lambda)S_t\}}d\varphi_t$. 
\end{definition}

We now pass to the decisive trick to modify the \emph{finite-horizon} problem. Given a shadow price~$\tilde{S}$, formulate the optimization problem such that we only allow for trading under transaction costs $\lambda>0$ during the interval $[0,T)$, but at time~$T$ we \emph{make an exception}. At the terminal time~$T$, we allow to liquidate our position in stocks at the shadow price~$\tilde{S}_T$ rather than at the (potentially lower) bid price $(1-\lambda)S_T$. Here is the mathematical formulation:

\begin{definition}[$\log$-optimality for horizon~$T$, modified version]\label{def:modified}
Given a shadow price $\tilde{S}=(\tilde{S}_t)_{t \ge 0}$ and a finite time horizon~$T$, we call an admissible (in the sense of Definition~\ref{def:opti}) trading strategy $(\varphi^0,\varphi)=(\varphi^0_t,\varphi_t)_{0 \le t \le T}$ \emph{$\log$-optimal for the modified problem} if  
$$\E\left[\log(\tilde{V}_T(\psi^0,\psi))\right] \le \E\left[\log(\tilde{V}_T(\varphi^0,\varphi))\right]$$
for every competing admissible strategy $(\psi^0,\psi)=(\psi^0_t,\psi_t)_{0 \le t \le T}$, where
$$\tilde{V}_t(\varphi^0,\varphi):=\varphi^0_t+\varphi_t \tilde{S}_t, \quad t\ge 0,$$
denotes the \emph{wealth process for liquidation in terms of~$\tilde{S}$}.
\end{definition}

Of course, the above definition is ``cheating'' by using the shadow price process~$\tilde{S}$ -- which is part of the solution -- in order to \emph{define} the optimization problem. But  this trick pays handsome dividends: Suppose that the $\log$-optimizer $(\varphi^0,\varphi)=(\varphi^0_t,\varphi_t)_{t \ge 0}$ for the frictionless market~$\tilde{S}$ is \emph{admissible} for the bid-ask process $((1-\lambda)S,S)$, i.e., is of finite variation and has a positive liquidation value even in terms of the lower bid price (this will be the case in the present context). Then this process $(\varphi^0_t,\varphi_t)_{0 \le t \le T}$ is the optimizer for the modified optimization problem from Definition~\ref{def:modified}:

\begin{proposition}\label{prop:shadow}
Let~$\tilde{S}$ be a shadow price for the bid-ask process $((1-\lambda)S,S)$ with associated $\log$-optimal portfolio $(\varphi^0,\varphi)$. If $V(\varphi^0,\varphi) \ge 0$, this portfolio is also $\log$-optimal for the modified problem under transaction costs from Definition \ref{def:modified}. 
\end{proposition}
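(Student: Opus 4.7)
The plan is to exploit the sandwich $(1-\lambda)S \le \tilde S \le S$ to show that any strategy admissible under transaction costs is dominated (after passing to liquidation at $\tilde S$) by the same strategy interpreted in the frictionless market driven by $\tilde S$, and to observe that for the shadow-price optimizer this domination is actually an equality. Then frictionless log-optimality of $(\varphi^0,\varphi)$ in the $\tilde S$-market finishes the job.

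First I would take an arbitrary admissible competitor $(\psi^0,\psi)$ with decomposition $\psi=\psi^\uparrow-\psi^\downarrow$ and compute $\tilde V_t(\psi^0,\psi)=\psi^0_t+\psi_t\tilde S_t$. Using the transaction-cost self-financing equation \eqref{e:selff2} for $\psi^0$, integration by parts for $\psi_t\tilde S_t$ (legitimate since $\tilde S$ is continuous, $\psi$ has finite variation and $\psi_{0-}=0$), and splitting $d\psi=d\psi^\uparrow-d\psi^\downarrow$, one obtains
\begin{equation*}
\tilde V_t(\psi^0,\psi)=x+\int_0^t\bigl(\tilde S_u-S_u\bigr)\,d\psi^\uparrow_u+\int_0^t\bigl((1-\lambda)S_u-\tilde S_u\bigr)\,d\psi^\downarrow_u+\int_0^t\psi_u\,d\tilde S_u.
\end{equation*}
Both bracketed integrands are non-positive because $\tilde S$ takes values in $[(1-\lambda)S,S]$; hence
\begin{equation*}
\tilde V_t(\psi^0,\psi)\le x+\int_0^t\psi_u\,d\tilde S_u.
\end{equation*}

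Next I would check admissibility of $\psi$ in the frictionless $\tilde S$-market: the identity $\tilde V_t-V_t=\psi_t^+(\tilde S_t-(1-\lambda)S_t)+\psi_t^-(S_t-\tilde S_t)\ge 0$ combined with $V_t(\psi^0,\psi)\ge 0$ yields $\tilde V_t(\psi^0,\psi)\ge 0$, and the crude bound above then shows that the frictionless wealth process $x+\int_0^\cdot\psi_u\,d\tilde S_u$ dominates $\tilde V$, so it is in particular nonnegative; thus $\psi$ is admissible for the frictionless problem with price $\tilde S$.

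Now for the distinguished strategy $(\varphi^0,\varphi)$ from Definition~\ref{def:shadow}, $d\varphi^\uparrow$ is supported on $\{\tilde S=S\}$ and $d\varphi^\downarrow$ on $\{\tilde S=(1-\lambda)S\}$, so the two correction integrals in the display above vanish identically. Therefore
\begin{equation*}
\tilde V_T(\varphi^0,\varphi)=x+\int_0^T\varphi_u\,d\tilde S_u.
\end{equation*}
Applying the monotonicity of $\log$ together with the assumed frictionless log-optimality of $\varphi$ in the $\tilde S$-market to the admissible competitor $\psi$ gives
\begin{equation*}
\E\!\left[\log\tilde V_T(\psi^0,\psi)\right]\le\E\!\left[\log\!\Bigl(x+\!\int_0^T\!\psi_u\,d\tilde S_u\Bigr)\right]\le\E\!\left[\log\!\Bigl(x+\!\int_0^T\!\varphi_u\,d\tilde S_u\Bigr)\right]=\E\!\left[\log\tilde V_T(\varphi^0,\varphi)\right],
\end{equation*}
which is the desired inequality. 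The only delicate point is the integration-by-parts/self-financing manipulation and keeping track of the supports of $d\varphi^\uparrow$ and $d\varphi^\downarrow$ so that the inequality becomes an equality precisely for the shadow-price optimizer; the rest is a direct consequence of frictionless log-optimality in the $\tilde S$-market.
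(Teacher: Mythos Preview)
Your argument is correct and is essentially the paper's proof, written out with the integration by parts made explicit. The paper packages the same computation by introducing the auxiliary bond process $\tilde{\psi}^0_t:=\psi^0_0-\int_0^t\tilde S_s\,d\psi_s$ and observing $\tilde{\psi}^0\ge\psi^0$; your inequality $\tilde V_t(\psi^0,\psi)\le x+\int_0^t\psi_u\,d\tilde S_u$ is precisely $\psi^0_t+\psi_t\tilde S_t\le\tilde{\psi}^0_t+\psi_t\tilde S_t$, since $x+\int_0^t\psi\,d\tilde S=\tilde{\psi}^0_t+\psi_t\tilde S_t$ by integration by parts.

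One small point worth making explicit: to invoke your display for $(\varphi^0,\varphi)$ you need that $(\varphi^0,\varphi)$ satisfies the transaction-cost self-financing equation~\eqref{e:selff2}, not just the frictionless one for $\tilde S$. This is true---and is exactly what the vanishing of your correction integrals encodes---but it is logically prior to applying the display: from $d\varphi^0=-\tilde S\,d\varphi$ and the support conditions on $d\varphi^\uparrow,d\varphi^\downarrow$ one gets $d\varphi^0=-S\,d\varphi^\uparrow+(1-\lambda)S\,d\varphi^\downarrow$, hence $(\varphi^0,\varphi)$ is self-financing for the bid-ask process, and then admissible by the hypothesis $V(\varphi^0,\varphi)\ge0$. (Equivalently, you can bypass the display for $(\varphi^0,\varphi)$ entirely and obtain $\tilde V_T(\varphi^0,\varphi)=x+\int_0^T\varphi\,d\tilde S$ directly from frictionless self-financing in the $\tilde S$-market.) The paper states this admissibility check as its first sentence; once added, your proof and the paper's coincide.
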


\begin{proof}
Since~$\varphi$ only increases (resp.\ decreases) on $\{\tilde{S}_t=S_t\}$ (resp.\ $\{\tilde{S}_t=(1-\lambda)S_t\}$), it follows {}from the definition that the portfolio $(\varphi^0,\varphi)$ is self-financing for the bid-ask process $((1-\lambda)S,S)$. Hence it is admissible in the sense of Definition~\ref{def:opti} if $V(\varphi^0,\varphi) \ge 0$. Now let $(\psi^0,\psi)$ be any admissible policy for $((1-\lambda)S, S)$ and set $\tilde{\psi}^0_t:= \psi^0_0 -\int_0^t \tilde{S}_s d\psi_s$. Then $\tilde{\psi}^0 \geq \psi^0$  and $(\tilde{\psi}^0,\psi)$ is an admissible portfolio for~$\tilde{S}$, since $(1-\lambda)S \leq \tilde{S} \leq S$. Together with the $\log$-optimality of $(\varphi^0,\varphi)$ for~$\tilde{S}$, this implies
\begin{align*}
\E\left[\log(\tilde{V}_T(\psi^0,\psi))\right] &\le \E\left[\log(\tilde{V}_T(\tilde{\psi}^0,\psi))\right] \le \E\left[\log(\tilde{V}_T(\tilde{\varphi}^0,\varphi))\right],
\end{align*}
which proves the assertion. 
\end{proof}

As a corollary, we obtain that the difference between the optimal values for the modified and the original problem is bounded by $\log(1-\lambda)$ and therefore of order~$O(\lambda)$ as the transaction costs~$\lambda$ tend to zero. In particular, this difference vanishes if one considers the infinite-horizon problem studied by~\cite{taksar.al.88}. 

\begin{corollary}\label{cor:tac}
Let~$\tilde{S}$ be a shadow price for the bid-ask process $((1-\lambda)S,S)$ with $\log$-optimal portfolio $(\varphi^0,\varphi)$ satisfying $\varphi^0,\varphi \ge 0$. Then 
$$ \E\left[\log(V_T(\varphi^0,\varphi))\right]  \ge \sup_{(\psi^0,\psi)} \E\left[\log(V_T(\psi^0,\psi))\right]+\log(1-\lambda), $$
where the supremum is taken over all $(\psi^0,\psi)$, which are admissible for the bid-ask process $((1-\lambda)S,S)$. Moreover, $(\varphi^0,\varphi)$ is growth-optimal for $((1-\lambda)S,S)$.
\end{corollary}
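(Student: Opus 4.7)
The plan is to sandwich the frictional liquidation value $V_T$ and the shadow liquidation value $\tilde{V}_T$ by two simple pointwise inequalities, and then invoke Proposition~\ref{prop:shadow} to transport $\log$-optimality from~$\tilde{S}$ to the bid-ask market.

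First, I would establish the two key pointwise bounds. Using $\varphi^0,\varphi\ge 0$, we have $\varphi^+=\varphi$ and $\varphi^-=0$, so
\[
V_T(\varphi^0,\varphi) - (1-\lambda)\tilde{V}_T(\varphi^0,\varphi)
= \lambda\varphi^0_T + (1-\lambda)\varphi_T\bigl(S_T - \tilde{S}_T\bigr) \ge 0,
\]
because $\tilde{S}_T\le S_T$. Hence $V_T(\varphi^0,\varphi)\ge (1-\lambda)\tilde{V}_T(\varphi^0,\varphi)$. Second, for any admissible $(\psi^0,\psi)$ for the bid-ask process, splitting into the cases $\psi_T\ge 0$ and $\psi_T<0$ and using $(1-\lambda)S_T\le \tilde{S}_T\le S_T$ gives $\tilde{V}_T(\psi^0,\psi)\ge V_T(\psi^0,\psi)$ in both cases.

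Taking logarithms and expectations, and chaining these bounds with the $\log$-optimality of $(\varphi^0,\varphi)$ for~$\tilde{S}$ from Proposition~\ref{prop:shadow}, yields
\[
\E\bigl[\log V_T(\varphi^0,\varphi)\bigr]
\ge \log(1-\lambda) + \E\bigl[\log \tilde{V}_T(\varphi^0,\varphi)\bigr]
\ge \log(1-\lambda) + \E\bigl[\log \tilde{V}_T(\psi^0,\psi)\bigr]
\ge \log(1-\lambda) + \E\bigl[\log V_T(\psi^0,\psi)\bigr].
\]
Taking the supremum over admissible $(\psi^0,\psi)$ proves the first assertion.

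For growth-optimality, divide by $T$ and note $\log(1-\lambda)/T\to 0$, which gives
\[
\liminf_{T\to\infty}\tfrac{1}{T}\E\bigl[\log V_T(\varphi^0,\varphi)\bigr]
\ge \limsup_{T\to\infty}\tfrac{1}{T}\E\bigl[\log V_T(\psi^0,\psi)\bigr]
\]
for every admissible $(\psi^0,\psi)$. Applying this inequality with $\psi=\varphi$ forces $\liminf=\limsup$ on the left-hand side, so the limit exists, and the growth-optimality in the sense of Definition~\ref{def:opti} follows. The only slightly delicate step is the first pointwise bound, which is where the nonnegativity hypothesis $\varphi^0,\varphi\ge 0$ is essential; everything else is bookkeeping once Proposition~\ref{prop:shadow} is in hand.
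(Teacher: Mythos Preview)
Your proof is essentially the paper's own argument: the same two pointwise sandwich inequalities $V_T(\varphi^0,\varphi)\ge(1-\lambda)\tilde V_T(\varphi^0,\varphi)$ and $\tilde V_T(\psi^0,\psi)\ge V_T(\psi^0,\psi)$, chained with Proposition~\ref{prop:shadow}, followed by division by~$T$. Your computations are correct and slightly more explicit than the paper's.

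One small logical slip: from the pointwise bound $\tfrac{1}{T}\E[\log V_T(\varphi^0,\varphi)]\ge \tfrac{\log(1-\lambda)}{T}+\tfrac{1}{T}\E[\log V_T(\psi^0,\psi)]$ you cannot conclude $\liminf$ on the left dominates $\limsup$ on the right; a pointwise inequality $a_T\ge b_T+o(1)$ only gives $\limsup a_T\ge\limsup b_T$ and $\liminf a_T\ge\liminf b_T$. So the argument ``applying this with $\psi=\varphi$ forces $\liminf=\limsup$'' does not go through. The paper sidesteps this by simply proving $\limsup\ge\limsup$, which already matches the growth-optimality definition up to the existence of the limit (established later via the explicit ergodic formula~\eqref{A1}). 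Also, the relevant definition is the unlabelled one for growth-optimality, not Definition~\ref{def:opti}.
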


\begin{proof}
Since  $(1-\lambda)S \leq \tilde{S} \leq S$, we have
\begin{equation}\label{eq:obvious}
V(\psi^0,\psi) \le \tilde{V}(\psi^0,\psi) 
\end{equation}
for any admissible $(\psi^0,\psi)$ and it follows from $\varphi^0,\varphi \ge 0$ that
\begin{equation}\label{eq:up}
V(\varphi^0,\varphi) \ge (1-\lambda)\tilde{V}(\varphi^0,\varphi).
\end{equation}
Combining~\eqref{eq:up}, Proposition~\ref{prop:shadow} and~\eqref{eq:obvious} then yields
\begin{align*}
\E\left[\log(V_T(\varphi^0,\varphi))\right]  &\ge \E\left[\log(\tilde{V}_T(\varphi^0,\varphi)\right]+\log(1-\lambda)\\
 &\ge \E\left[\log(\tilde{V}_T(\psi^0,\psi)\right]+\log(1-\lambda)\\
&\ge \E\left[\log(V_T(\psi^0,\psi)\right]+\log(1-\lambda),
\end{align*}
for all $(\psi^0,\psi)$ admissible for $((1-\lambda)S,S)$, which proves the first part of the assertion. It also implies
\begin{align*}
\limsup_{T \to \infty} \frac{1}{T} \E\left[\log(V_T(\varphi^0,\varphi))\right] &\ge \limsup_{T \to \infty} \frac{1}{T}\left( \E\left[\log(V_T(\psi^0,\psi))\right]+\log(1-\lambda)\right)\\
&= \limsup_{T \to \infty} \frac{1}{T}\E\left[\log(V_T(\psi^0,\psi))\right],
\end{align*}
for any admissible $(\psi^0,\psi)$, which completes the proof. 
\end{proof}

We formulated the corollary only for positive holdings $\varphi^0,\varphi \ge 0$ in bonds and stocks. In the present context, this will only be satisfied if $0 \leq \mu \leq \sigma^2$. To cover also the case $\mu>\sigma^2$, we show in Lemma~\ref{lem:lambdasmall} below that the assertion of Corollary~\ref{cor:tac} remains true more generally in the present setup, provided that the transaction costs~$\lambda$ are \emph{sufficiently small}.

Finally, let us point out that -- due to Definition~\ref{def:modified} -- much of the well-established duality theory for frictionless markets (cf.\ e.g.~\cite{he.pearson.91,karatzas.al.87,pliska.86}) carries over to the modified problem. Let~$\mathbb{Q}_T$ denote the unique equivalent martingale measure for the process $(\tilde{S}_t)_{0 \le t \le T}$. Then the pair $((\tilde{S}_t)_{0 \le t \le T},\mathbb{Q}_T)$, which corresponds to a \emph{consistent price system} in the notation of~\cite{guasoni.al.08}, is the dual optimizer for the modified problem from Definition~\ref{def:modified} (compare~\cite{cvitanic.karatzas.96}). Recalling that the conjugate function to $U(x)=\log(x)$ is $U_c(y)=-\log(y)-1$, we obtain the equality of the primal and dual values 
$$\mathbb{E}\left[\log(\tilde{V}_T(\varphi^0,\varphi))\right]=\mathbb{E}\left[U_c(y\tfrac{d\mathbb{Q}_T}{d\mathbb{P}})\right]+1=\mathbb{E}\left[-\log\left(y\tfrac{d\mathbb{Q}_T}{d\mathbb{P}}\right)\right],$$
where the relation between the Lagrange multiplier $y>0$ and the initial endowment $x>0$ is given by $y=\log'(x)=1/x$. We then also have the first-order condition
$$ \tilde{V}_T(\varphi^0,\varphi)=-U_{c}'(y\tfrac{d\mathbb{Q}_T}{d\mathbb{P}})=x \tfrac{d\mathbb{P}}{d\mathbb{Q}_T}$$
as well as several other identities of the duality theory, see e.g.~\cite{he.pearson.91,karatzas.al.87,pliska.86,kramkov.schachermayer.99}. In other words, the little trick of ``allowing liquidation in terms of a shadow price~$\tilde{S}_T$ at terminal time~$T$'' allows us to use the full strength of the duality theory developed in the frictionless case. 

\medskip 

The main contribution of the present article is that we are able to \emph{explicitly determine} a shadow price process~$\tilde{S}$ for the bid-ask process $((1-\lambda)S,S)$ in Theorem~\ref{thm:shadow}. Roughly speaking, the process~$\tilde{S}$ oscillates between the ask price~$S$ and the bid price  $(1-\lambda)S$, leading to buying (resp.� selling) of the stock when $\tilde{S}_t=S_t$ (resp.\ $\tilde{S}_t=(1-\lambda)S_t$). The predictable sets $\{\tilde{S}_t=S_t\}$ and $\{\tilde{S}_t=(1-\lambda)S_t\}$ when one buys (resp.\ sells) the stock are of ``local time  type''. Remarkably, our shadow price process nevertheless is an It\^o process, whence it ``does not move'' on the sets $\{\tilde{S}_t=S_t\}$ and $\{\tilde{S}_t=(1-\lambda)S_t\}$. The reason is that there is a kind of ``smooth pasting'' when the process~$\tilde{S}$ touches~$S$ resp.\ $(1-\lambda)S$. When this happens, the processes~$\tilde{S}$ and~$S$ (resp.~$\tilde{S}$ and $(1-\lambda)S$) are aligned of first order, see Section~\ref{sec:2} for more details. This parallels the results of~\cite{kallsen.muhlekarbe.10}. These authors determine a shadow price for the infinite-horizon consumption problem. Their characterization, however, involves an SDE with instantaneous reflection, whose coefficients have to be determined from the solution to a free boundary problem. 

Here, on the other hand, the relation between the shadow price~$\tilde{S}$ and the ask price~$S$ (as well as its running minimum and maximum) is established via a deterministic function~$g$, which is the solution of an ODE and known in \emph{closed form} up to determining the root of a deterministic function.  This ODE is derived heuristically from an economic argument in Section~\ref{sec:3}, namely by applying Merton's rule to the process~$\tilde{S}$. Subsequently, we show in Section~\ref{sec:4} that these heuristic considerations indeed lead to well-defined solutions. With our candidate shadow price process~$\tilde{S}$ at hand, Merton's rule quickly leads to the corresponding $\log$-optimal portfolio in Section~\ref{sec:6}. This in turn allows us  to verify that~$\tilde{S}$ is indeed a shadow price. Finally, in Section~\ref{sec:7}, we expound on the explicit nature of our previous considerations. More specifically, we derive fractional Taylor expansions in powers of~$\lambda^{1/3}$ for the relevant quantities, namely the width of the no-trade region and the asymptotic growth rate. The coefficients of these power series, which are rational functions of $(\mu/\sigma^2)^{1/3}$
and $(1-\mu/\sigma^2)^{1/3}$,  can \emph{all} be algorithmically computed. For the related infinite-horizon consumption problem, the leading terms were determined and the second-order terms were conjectured in~\cite{janecek.shreve.04} (compare also~\cite{barles.soner.98, rogers.04, shreve.soner.94, whalley.wilmott.97} for related asymptotic results).

Of course, the very special setting of the paper can be generalized in several directions. One may ask whether similar results can be obtained for more general diffusion processes or, even more generally, for stochastic processes which allow for $\epsilon$-consistent price sytems such as geometric fractional Brownian motion. Another natural extension of the present results is the consideration of power utility and/or consumption. This is a theme for future research (compare \cite{gerhold.al.10b}).

\section{Reflection without local time via smooth pasting}\label{sec:2}

In this section, we show how to construct a process~$\tilde{S}$ that remains within the upper and lower boundaries of the bid-ask spread $[(1-\lambda)S,S]$,  yet does not incorporate local time, i.e., is an It\^o process (see~\eqref{G5a} below).

To this end, suppose that there is a real number $\bar{s} >1$ and a $C^2$-function
\begin{equation}\label{G2b}
g: [1,\bar{s}]\to [1,(1-\lambda)\bar{s}]
\end{equation}
such that $g'(s) >0,$ for $1\le s\le \bar{s},$ and~$g$ satisfies the \emph{smooth pasting condition} with the line $y=x$ at the point $(1,1)$, i.e.,
\begin{equation}\label{G2}
g(1) =g'(1) =1,
\end{equation}
and with the line $y=(1-\lambda)x$ at the point $(\bar{s}, (1-\lambda)\bar{s}),$ i.e.,
\begin{equation}\label{G2a}
g(\bar{s}) =(1-\lambda)\bar{s} \qquad \mbox{and} \qquad g'(\bar{s}) =1-\lambda.
\end{equation}
These conditions are illustrated in Figure~\ref{fig:g} and motivated in Remark \ref{rem:sp} below.
\begin{figure}[htbp]
\centering
\includegraphics[width=0.8\textwidth]{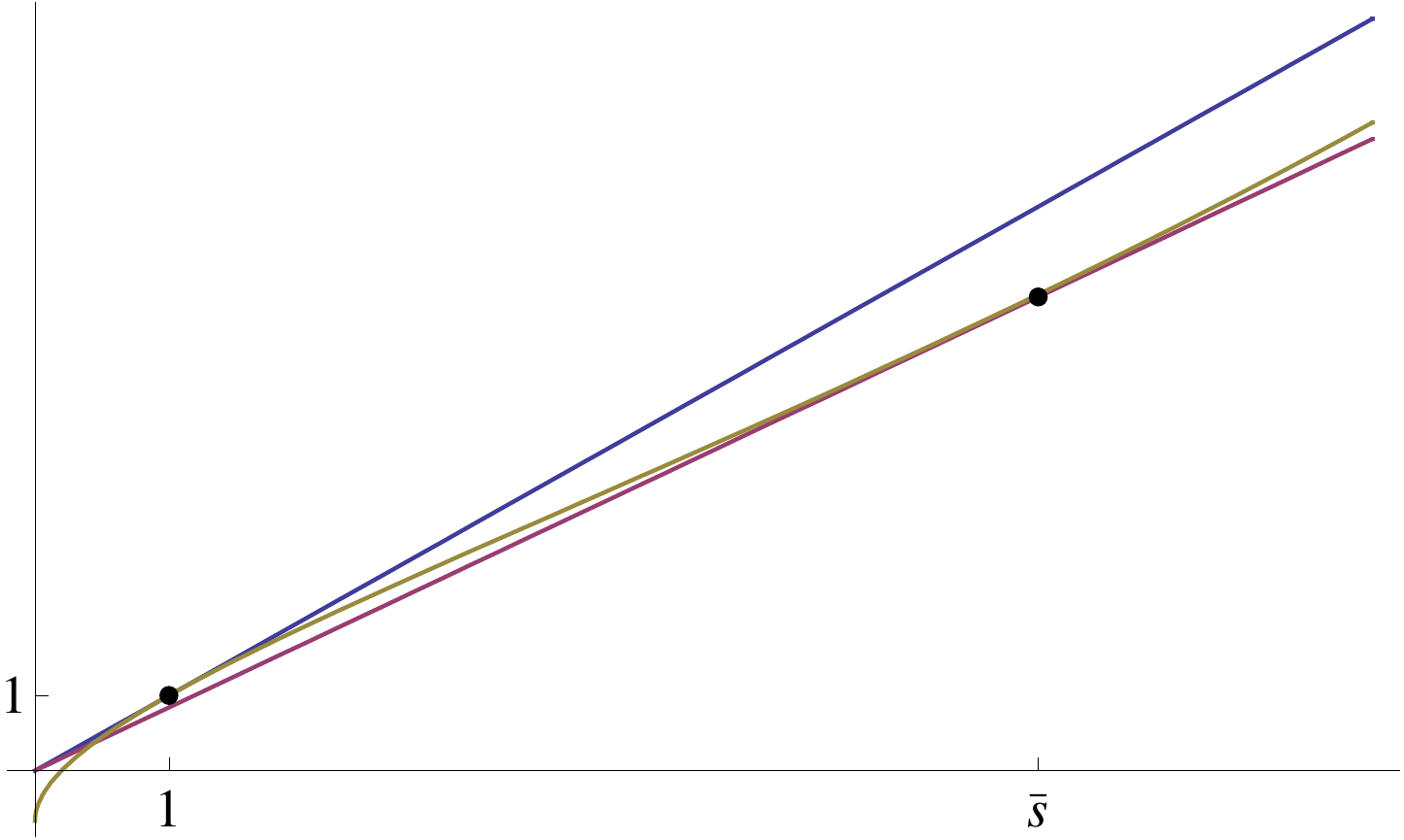}
\caption{Smooth pasting conditions for the function~$g$.}\label{fig:g}
\end{figure}

Now define sequences of stopping times $(\varrho_n)^\i_{n=0}, (\sigma_n)^\i_{n=1}$ and processes $(m_t)_{t \ge 0}$ and $(M_t)_{t \ge 0}$ as follows: let
$\varrho_0=0$ and~$m$ the running minimum process of~$S$, i.e.,
\begin{align*}
m_t=\inf\limits_{\varrho_0 \le u\le t}  S_u, \qquad 0\le t\le\sigma_1,
\end{align*}
where the stopping time~$\sigma_1$ is defined as 
$$\sigma_1 =\inf \{t\geq \varrho_0 : \tfrac{S_t}{m_t} \geq \bar{s}\}.$$

Next define~$M$ as the running maximum process of~$S$ after time~$\sigma_1$, i.e.,
\begin{align*}
M_t = \sup\limits_{\sigma_1 \le u\le t} S_u, \qquad \sigma_1 \le t\le\varrho_1,
\end{align*}
where the stopping time~$\varrho_1$ is defined as
$$\varrho_1 =\inf \{t\geq \sigma_1 : \tfrac{S_t}{M_t} \le \tfrac{1}{\bar{s}}\}.$$

For $t\geq \varrho_1$, we again define
\begin{align*}
m_t=\inf\limits_{\varrho_1\le u\le t} S_u, \qquad \varrho_1 \le t\le \sigma_2,
\end{align*}
where
$$\sigma_2 =\inf \{t\geq \varrho_1 : \tfrac{S_t}{m_t} \geq \bar{s}\},$$
and, for $t\geq \sigma_2$, we define
\begin{align*}
M_t= \sup\limits_{\sigma_2 \le u\le t} S_u, \qquad \sigma_2 \le t\le \varrho_2,
\end{align*}
where
$$\varrho_2 =\inf \{ t\geq \sigma_2 : \tfrac{S_t}{M_t} \le \tfrac{1}{\bar{s}}\}.$$

Continuing in an obvious way we obtain series  $(\varrho_n)^\i_{n=0}$ and $(\sigma_n)^\i_{n=1}$ of a.s.\ finite stopping times $\varrho_n$ and $\sigma_n$, increasing a.s.\ to infinity,
such that~$m$ (resp.~$M$) are the relative running minima (resp.\ maxima) of $S$ defined on the stochastic intervals $(\llbracket \varrho_{n-1}, \sigma_n \rrbracket)^\i_{n=1}$ (resp.\ $(\llbracket\sigma_n, \varrho_n \rrbracket)^\i_{n=1}$ ). Note that
$$\bar{s} m_{\varrho_n} = M_{\varrho_n} = \bar{s} S_{\varrho_n}, \qquad\qquad \mbox{for}\ n \in \mathbb{N},$$
and
$$\bar{s} m_{\sigma_n} =M_{\sigma_n} = S_{\sigma_n}, \qquad \mbox{for} \ n \in \mathbb{N}.$$

We may therefore continuously extend the processes~$m$ and~$M$ to~$\mathbb{R}_+$ by letting
\begin{align*}
M_t :=\bar{s} m_t, \qquad &\mbox{for} \ t\in \bigcup\limits^\i_{n=0} \llbracket\varrho_n, \sigma_{n+1}\rrbracket, \\
m_t := \tfrac{M_t}{\bar{s}}, \qquad &\mbox{for} \ t\in \bigcup\limits^\i_{n=1} \llbracket \sigma_n, \varrho_n \rrbracket.
\end{align*}

For $t \ge 0$, we then have $\bar{s} m_t =M_t$ as well as $m_t \le S_t \le M_t$, and hence
$$ m_t \le S_t \le \bar{s}m_t, \qquad \mbox{for} \ t \ge 0.$$ 
By construction, the processes~$m$ and~$M$ are of finite variation and only decrease (resp.\ increase) on the predictable set $\{m_t=S_t\}$ (resp.\ $\{M_t=S_t\}=\{m_t=S_t/\bar{s}\})$.

\medskip

We can now state and prove the main result of this section.

\begin{proposition}\label{proG5}
Under the above assumptions define the continuous process
\begin{equation}\label{G5}
\tilde{S}_t=m_t g(\tfrac{S_t}{m_t}), \qquad t \ge 0.
\end{equation}

Then~$\tilde{S}$ is an It\^o process starting at $\tilde{S}_0=S_0=1$ and satisfying the stochastic differential equation
\begin{equation}\label{G5a}
d\tilde{S}_t=g'\left(\tfrac{S_t}{m_t}\right) dS_t +\tfrac{1}{2m_t} g'' \left(\tfrac{S_t}{m_t}\right) d \langle S,S \rangle_t.
\end{equation}
Moreover,~$\tilde{S}$ takes values in the bid-ask spread $[(1-\lambda)S,S]$.
\end{proposition}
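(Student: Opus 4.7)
The plan is to apply Itô's formula to the deterministic function $F(x,y) := y\, g(x/y)$ evaluated at $(S_t, m_t)$, exploiting the fact that the process $m$ is continuous and of finite variation on each stochastic interval $\llbracket \varrho_{n-1}, \sigma_n \rrbracket$ and $\llbracket \sigma_n, \varrho_n \rrbracket$ (and continuous across the patching times $\sigma_n, \varrho_n$, thanks to the identity $M = \bar{s}m$ built into the construction). Since $m$ carries no quadratic variation and has no covariation with $S$, the two-dimensional Itô formula produces
\[ d\tilde{S}_t = F_x(S_t, m_t)\, dS_t + F_y(S_t, m_t)\, dm_t + \tfrac{1}{2} F_{xx}(S_t, m_t)\, d\langle S, S\rangle_t, \]
with $F_x = g'(S/m)$, $F_{xx} = (1/m)\, g''(S/m)$, and $F_y = g(S/m) - (S/m)\, g'(S/m)$. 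The initial value $\tilde{S}_0 = 1$ is immediate from $m_0 = S_0 = 1$ and $g(1) = 1$.

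The decisive observation is that $m$ moves only on the specific predictable sets dictated by its construction: it decreases solely on $\{S_t = m_t\}$ (when $m$ serves as the running minimum of $S$) and increases solely on $\{S_t = \bar{s}\, m_t\}$ (when $M_t = \bar{s}\, m_t$ serves as the running maximum). On the former set $s := S/m = 1$, so the smooth pasting conditions $g(1) = g'(1) = 1$ yield $F_y = 1 - 1\cdot 1 = 0$; on the latter set $s = \bar{s}$, so $g(\bar{s}) = (1-\lambda)\bar{s}$ and $g'(\bar{s}) = 1-\lambda$ yield $F_y = (1-\lambda)\bar{s} - \bar{s}(1-\lambda) = 0$. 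Hence $F_y(S_t, m_t)\, dm_t \equiv 0$, and the Itô expansion collapses onto the advertised SDE~\eqref{G5a}.

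For the bid-ask containment, write $\tilde{S}_t/S_t = g(s)/s$ with $s = S_t/m_t \in [1, \bar{s}]$. The smooth pasting conditions say exactly that the line $y = s$ is tangent to the graph of $g$ at $(1,1)$ and the line $y = (1-\lambda)s$ is tangent at $(\bar{s}, (1-\lambda)\bar{s})$. Provided $g$ is concave on $[1, \bar{s}]$ (a property I expect the concretely constructed function to enjoy by virtue of the defining ODE introduced in Section~\ref{sec:3}), each tangent line lies above and below $g$ respectively, so $(1-\lambda) s \le g(s) \le s$ on $[1, \bar{s}]$, and therefore $(1-\lambda) S_t \le \tilde{S}_t \le S_t$. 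The main obstacle I anticipate is not the Itô calculation itself (algebraically transparent once the two smooth-pasting cancellations are spotted) but this last sandwich estimate: the bare hypotheses on $g$ stated in the proposition — monotonicity and the four boundary identities — do not by themselves force concavity, so the argument has to lean on a shape property of $g$ that will be explicit only after the construction in Sections~\ref{sec:3}--\ref{sec:4} is in place.
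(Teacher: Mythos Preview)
Your derivation of the SDE is correct and takes a genuinely different route from the paper. The paper splits $\llbracket 0,\sigma_1\wedge T\rrbracket$ (and then the subsequent intervals) into a ``regular'' set $R^\varepsilon$ on which $S/m$ stays bounded away from $\{1,\bar s\}$, so that $m$ is locally constant and the one-dimensional It\^o formula applies with $m$ frozen, and a ``local-time'' set $L^\varepsilon$; it then shows the $L^\varepsilon$-contribution vanishes as $\varepsilon\to0$ via a second-order Taylor bound on $|S-\tilde S|$ together with a semimartingale upcrossing estimate controlling the number of excursions. You instead invoke the two-dimensional It\^o formula for $F(x,y)=yg(x/y)$ directly (after a routine $C^2$-extension of $g$ beyond the closed interval), use that $m$ is continuous of finite variation so the quadratic-variation and covariation terms involving $m$ vanish, and kill the surviving $F_y\,dm$ term by observing that $dm$ is carried by $\{S/m\in\{1,\bar s\}\}$, where the smooth-pasting identities give $F_y=g(s)-sg'(s)=0$. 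Your argument is shorter and more transparent; the paper's makes the local-time mechanism explicit but is considerably heavier.

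Your containment argument, however, contains a genuine error. The function $g$ is \emph{not} concave on $[1,\bar s]$: if it were, its graph would lie below the tangent line at $\bar s$, namely $y=(1-\lambda)s$, forcing $g(1)\le 1-\lambda<1=g(1)$. (In fact one can check from the ODE in Section~\ref{sec:3} that $g''(1)<0$ while $g''(\bar s)>0$, so $g$ is concave near $1$ and convex near $\bar s$.) Thus concavity is neither true nor, even if it held, would it deliver the lower bound $(1-\lambda)s\le g(s)$ as you claim. Your instinct that the bid-ask containment leans on properties of the concrete $g$ from Section~\ref{sec:4} rather than the bare hypotheses is correct---the paper's own proof of the proposition does not address this point either---but the relevant property is not global concavity. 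A clean substitute is to show that $h(s):=g(s)/s$ is monotone on $[1,\bar s]$: since $h(1)=1$, $h(\bar s)=1-\lambda$ and $h'(1)=h'(\bar s)=0$, monotonicity immediately gives $h\in[1-\lambda,1]$ and hence $(1-\lambda)S\le\tilde S\le S$.
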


\begin{remark}
We have formulated the proposition only for the Black-Scholes model~\eqref{eq:stock}. But -- unlike the considerations in the following sections -- it has little to do with this particular process and can also be formulated for general It\^o processes satisfying some regularity conditions.
\end{remark}

\begin{remark}\label{rem:sp}
Formula~\eqref{G5a} is obtained by applying It\^{o}'s formula to~\eqref{G5}, pretending that the process $(m_t)_{t \ge 0}$ were constant.
The idea behind this approach is that on the complement of the ``singular'' set $\{S_t=m_t\} \cup \{S_t=M_t\} \subseteq \Omega \times \mathbb{R}_+$
the process $(m_t)_{t \ge 0}$ indeed ``does not move'' (the statement making sense, at least, on an intuitive level). On the set $\{S_t=m_t\} \cup \{S_t=M_t\}$,
where the process $(m_t)_{t \ge 0}$ ``does move'', the smooth pasting conditions~\eqref{G2} and~\eqref{G2a} will make sure that the SDE~\eqref{G5a} is not
violated either, i.e., the process~$\tilde{S}$ ``does not move'' on this singular set. This intuitive reasoning will be made precise in the subsequent proof of Proposition~\ref{proG5}.
\end{remark}

\begin{proof}[of Proposition~\ref{proG5}]
We first show that the process~$\tilde{S}$ defined in~\eqref{G5} satisfies the SDE~\eqref{G5a} on the stochastic interval $\llbracket 0,\sigma_1 \wedge T\rrbracket$, where $T>0$ is arbitrary.

Fix $0<\ve <\ve_0,$ where $\ve_0=\bar{s}-1,$ and define inductively the stopping times $(\tau_k)^\i_{k=0}$ and $(\eta_k)^\i_{k=1}$ by letting
$\tau_0=0$ and, for $k\geq 1$, 
\begin{align*}
\eta_k &=\inf \{t: \tau_{k-1} <t\le \sigma_1,\ \tfrac{S_t}{m_t} \geq 1+\ve\} \wedge T, \\
\tau_k &=\inf \{t:\eta_k <t\le \sigma_1,\ \tfrac{S_t}{m_t} \le 1+\tfrac{\ve}{2}\} \wedge T.
\end{align*}

Clearly, the sequences $(\tau_k)^\i_{k=0}$ and $(\eta_k)^\i_{k=1}$ increase a.s.~to $\sigma_1\wedge T$.

We partition the stochastic interval $\rrbracket 0,\sigma_1\wedge T\rrbracket$ into $L^\ve \cup R^\ve$ (the letters reminding of
``local time'' and ``regular set''), where
$$L^\ve =\bigcup\limits^\i_{k=1} \rrbracket \tau_{k-1}, \eta_k\rrbracket, \qquad R^\ve =\bigcup\limits^\i_{k=1}\rrbracket \eta_k,\tau_k \rrbracket.$$

As~$R^\ve$ is a predictable set we may form the stochastic integral $\int_0^\cdot  \I_{R^\ve}(u) dm_u$. Arguing on each of the intervals
$\rrbracket \eta_k ,\tau_k\rrbracket$, we obtain
\begin{equation}\label{G8a}
\int_0^t \I_{R^\ve}(u) dm_u =\sum\limits^\i_{k=1} \int^t_0 \I_{\rrbracket\eta_k, \tau_k\rrbracket} (u)~ dm_u =0, \quad  \mbox{for} \ 0\le t\le\sigma_1 \wedge T.
\end{equation}

This is a mathematically precise formula corresponding to the intuition that~$m$ ``does not move'' on~$R^\ve.$ Arguing once more on the intervals
$\rrbracket\eta_k,\tau_k\rrbracket,$ It\^o's formula and~\eqref{G8a} imply that 
\begin{equation}\label{G8}
\int_0^t \I_{R^\ve}(u) d\tilde{S}_u = \int^t_0 \I_{R^\ve} (u) \left[g' \left(\tfrac{S_u}{m_u}\right) dS_u +\tfrac{1}{2m_u} \ g'' \left(\tfrac{S_u}{m_u}\right) 
d \langle S,S \rangle_u\right].
\end{equation}
In other words, the SDE~\eqref{G5a} holds true, when localized to the set~$R^\ve.$

We now show that the process $\int_0^\cdot \I_{L^\ve}(u) d\tilde{S}_u$ tends to zero, as $\ve \to 0.$ More precisely, we shall show that 
\begin{equation}\label{G9}
\lim\limits_{\ve \to 0} \sup\limits_{0\le t\le\sigma_1} \left| \int_0^t  \I_{L^\ve}(u) (d\tilde{S}_u-dS_u)\right| =0,
\end{equation}
where the limit is taken with respect to convergence in probability. This will finish the proof of~\eqref{G5a} on $\llbracket 0, \sigma_1 \wedge T\rrbracket,$ 
as~\eqref{G9} implies that, for $0\le t\le\sigma_1 \wedge T$,
\begin{align*}
\tilde{S}_t&=1+\lim\limits_{\ve\to 0} \int_0^t \I_{R^\ve}(u) d\tilde{S}_u \\
&=1+\lim\limits_{\ve\to 0} \int^t_0 \I_{R^\ve} (u) \left[g'\left(\tfrac{S_u}{m_u}\right) dS_u +\tfrac{1}{2m_u} ~ g''\left(\tfrac{S_u}{m_u}\right) d\langle S,S \rangle_u \right] \\
&=1+\int^t_0 \left[g' \left(\tfrac{S_u}{m_u}\right) dS_u+\tfrac{1}{2m_u} ~ g''\left(\tfrac{S_u}{m_u}\right) d \langle S,S \rangle_u\right].
\end{align*}

Here the first equality follows from~\eqref{G9} and the fact that $\lim_{\ve\to 0}\mathrm{Leb}\ \otimes\ \p(L^\ve)=0$, with $\mathrm{Leb}$ denoting Lebesgue
measure on $[0,T]$, which gives
$$\lim\limits_{\ve\to 0} \sup\limits_{0\le t\le \sigma_1 \wedge T} \left|\int_0^t \I_{L^\ve}(u)dS_u\right|=\lim\limits_{\ve\to 0} \sup\limits_{0\le t\le \sigma_1 \wedge T} \left| \int_0^t \I_{L^\ve}(u)d\tilde{S}_u\right| =0$$
in probability. The second equality is just~\eqref{G8}, and the third one again follows from $\lim_{\ve\to 0}\mathrm{Leb} \otimes\p (L^\ve)=0$ and the fact
that the drift and diffusion coefficients appearing in the above integral are locally bounded. 

To show~\eqref{G9}, fix $\omega\in\Omega$ and $k\geq 1$ such that $\eta_{k+1}(\omega) < \sigma_1(\omega)\wedge T.$
By the definition of~$\tilde{S}$ and~$\tau_k$ as well as a second order Taylor expansion of~$g$ around~$1$ utilizing $g(1)=g'(1)=1$, we obtain 
\begin{align*}
| S_{\tau_k} (\omega) - \tilde{S}_{\tau_k}(\omega)| &= \left|S_{\tau_k}(\omega) -m_{\tau_k}(\omega) g\left(\tfrac{S_{\tau_k}(\omega)}{m_{\tau_k}(\omega)}\right)\right| \\
&=\left|S_{\tau_k}(\omega) \left(1-\tfrac{1}{1+\tfrac{\ve}{2}} \ g (1+\tfrac{\ve}{2})\right)\right| \\
&\le C(\omega)\ve^2,
\end{align*}
where $C(\omega):=\max_{0 \le t \leq T} S_t(\omega) \times \max_{1\le s\le\bar{s}} |g''(s)|$ does not depend on~$\ve$ and~$k$. Likewise,
$|S_{\eta_k}(\omega) - \tilde{S}_{\eta_k}(\omega)| \le C(\omega)\ve^2,$ and, in fact
$$|S_t (\omega)- \tilde{S}_t (\omega)|\le C(\omega)\ve^2,\qquad \mbox{for}  \ \tau_k (\omega)\le t\le\eta_k (\omega),$$
for fixed~$k$. Denote by~$N^\ve$ the random variable
$$N^\ve =\sup \{k\in \mathbb{N}: \ \tau_k <\sigma_1\wedge T\}.$$

Then
\begin{equation}\label{G12}
\sup\limits_{0\le t\le \sigma_1} \left|\int_0^t \I_{L^\ve}(u) (d\tilde{S}_u -dS_u)\right| \le (N^\ve +1)C\ve^2.
\end{equation}

By It\^o's formula, we have
$$d\left(\frac{S_t}{m_t}\right)= \mu \frac{S_t}{m_t}dt+\frac{S_t}{m_t}d\log(m_t)+\sigma\frac{S_t}{m_t}dW_t.$$
Since $|S/m|$ is bounded by~$\bar{s}$, the third term on the right-hand side is a square-integrable martingale, and the first one is of integrable variation. Moreover, the variation of the second term is bounded by~$2\bar{s}$ times the variation of $\sup_{0 \leq t \leq T} \log(S_T)$, which is integrable as well.  As $S_{\tau_k}/m_{\tau_k}-S_{\eta_k}/m_{\eta_k} = -\tfrac{\ve}{2}$, if $\eta_k <\sigma_1 \wedge T,$ one can therefore apply a version of Doob's upcrossing inequality for semimartingales (cf.~\cite{barlow.83}) to conclude that $\lim_{\ve\to 0} \ve^{3/2} N^\ve=0$ in~$L^1$ and hence in probability. Thus~\eqref{G12} implies~\eqref{G9}
which in turn shows~\eqref{G5a}, for $0\le t\le \sigma_1 \wedge T$.

\medskip

Repeating the above argument by considering the function~$g$ in an $\ve$-neighborhood of~$\bar{s}$ rather than~$1$ and using that $g(\bar{s})=(1-\lambda)\bar{s}$ and $g'(\bar{s})=1-\lambda$, we obtain
$$\lim\limits_{\ve\to 0} \sup\limits_{\sigma_1 \wedge T\le t\le \varrho_1\wedge T} \left|\int_0^t \I_{L^\ve}(u) (d\tilde{S}_u -dS_u)\right|=0,$$
which implies the validity of~\eqref{G5a} for $\sigma_1 \wedge T \le t\le\varrho_1 \wedge T.$

Continuing in an obvious way we obtain~\eqref{G5a} on $\bigcup^\i_{k=1} (\rrbracket \varrho_{k-1}, \sigma_k\rrbracket \cup \rrbracket \sigma_k , \varrho_k
\rrbracket ) \cap [0,T]=[0,T]$. Since~$T$ was arbitrary, this completes the proof.
\end{proof}

\begin{remark}\label{rem:symmetric}
We have made the assumption $\bar{s} >1$ in~\eqref{G2b} above. There also is a symmetric version of the above proposition, where $0 <\bar{s} <1$ and the function
$$ g: [\bar{s} ,1]\to \left[ (1-\lambda) \bar{s},1\right]$$
satisfies
$$ g(1) =g' (1) =1 \qquad \mbox{and} \qquad g(\bar{s})/\bar{s} =g'(\bar{s}) =1-\lambda.$$
See Figure~\ref{fig:g2} for an illustration.

\begin{figure}[htbp]
\centering
\includegraphics[width=0.8\textwidth]{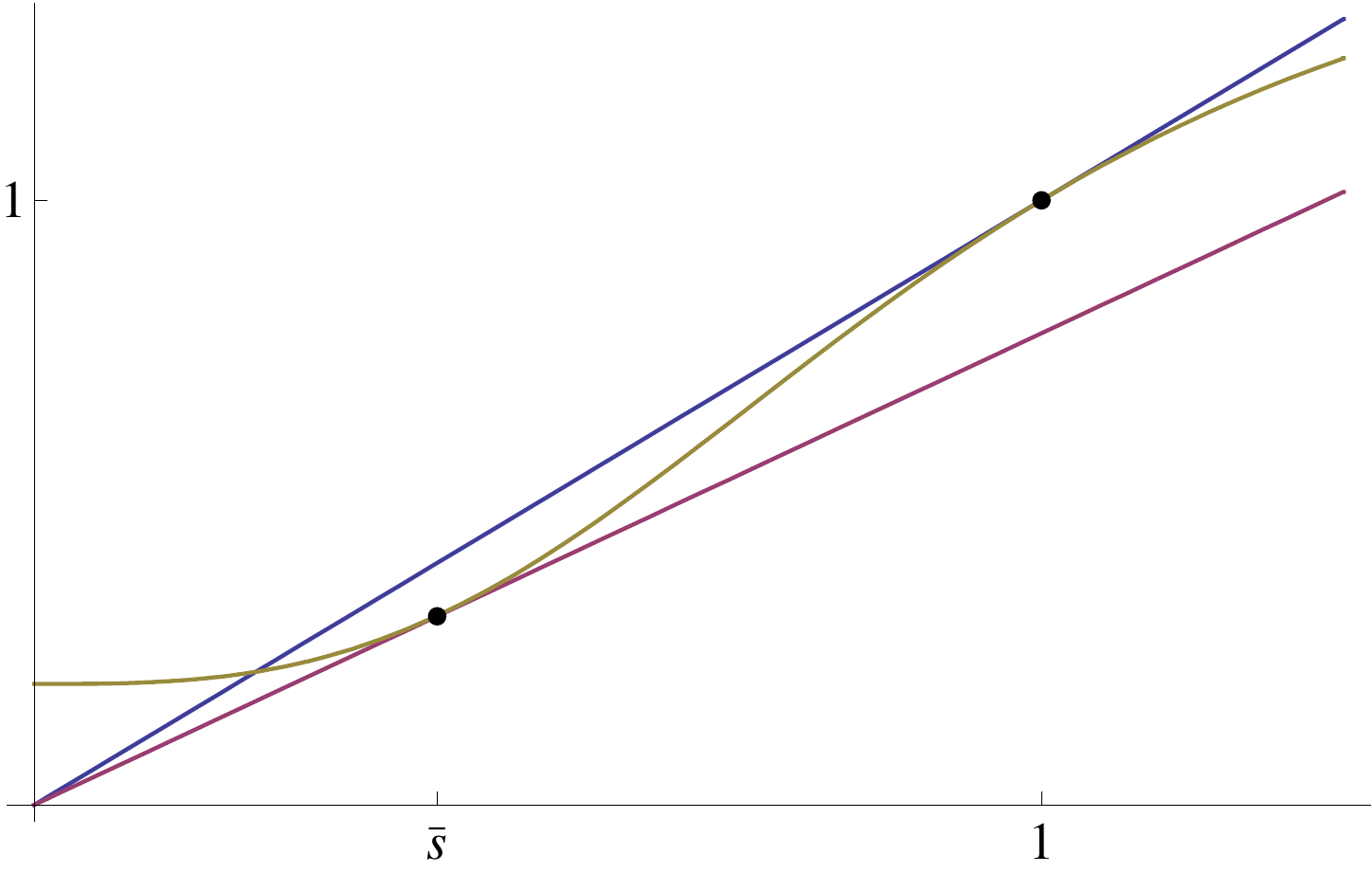}
\caption{Smooth pasting conditions for the function~$g$, for $\theta>1$.}\label{fig:g2}
\end{figure}

Define now
$$m_t =\sup\limits_{0\le u\le t} S_u, \qquad 0\le t\le\sigma_1,$$
as the \emph{running maximum process} of~$S$, where
$$\sigma_1 =\inf \{ t\geq \varrho_0=0: \tfrac{S_t}{m_t} \le \bar{s}\}.$$
Likewise, define
$$M_t=\inf\limits_{\sigma_1\le u\le t} S_u, \qquad \sigma_1 \le t\le \varrho_1,$$
as the \emph{running minimum process} of~$S$, where
$$\varrho_1=\inf\{t\geq \sigma_1 :\tfrac{S_t}{M_t} \geq \tfrac{1}{\bar{s}}\}$$
etc. Continuing in an obvious way, we can again extend~$m$ continuously to~$\mathbb{R}_+$ by setting  
$$m_t=M_t/\bar{s}, \quad \mbox{for}\ t \in \bigcup_{n=0}^\infty \llbracket \sigma_n, \varrho_n \rrbracket.$$
For
$$\tilde{S}_t=m_t g\left(\tfrac{S_t}{m_t}\right), \qquad \T,$$
we then again obtain the conclusion of the above proposition, i.e.,
$$d\tilde{S}_t =g'\left(\tfrac{S_t}{m_t}\right) dS_t +\tfrac{1}{2m_t} g'' \left(\tfrac{S_t}{m_t}\right) d \langle S,S\rangle_t.$$
\end{remark}

\section{Heuristic derivation of the function~$g$}\label{sec:3}
 We now explain on an intuitive level how to come up with a candidate function~$g$ that satisfies the smooth pasting conditions from Section~\ref{sec:2} and leads to a process $\tilde{S}_t=m_t g(S_t/m_t)$, whose $\log$-optimal portfolio keeps the positions in stock and bond constant as long as $S_t/m_t$ lies in the interior of $[1,\bar{s}]$ (resp.\ $[\bar{s},1]$ in the setting of Remark~\ref{rem:symmetric}).

To this end, suppose we start at $S_{t_0}=1=m_{t_0}$ with a portfolio $(\varphi^0_{t_0},\varphi_{t_0})$ such that the \emph{proportion~$\pi$ of total wealth invested into stocks in terms of the ask price~$S$}
\begin{equation}\label{eq:minfrac}
\pi_{t_0}=\frac{\varphi_{t_0} S_{t_0}}{\varphi^0_{t_0}+\varphi_{t_0} S_{t_0}}=\frac{1}{1+\varphi^0_{t_0}/\varphi_{t_0}}
\end{equation} 
lies on the buying side of the no-trade region. 

First suppose that the Merton proportion $\theta=\mu/\sigma^2$ lies in the interval $(0,1)$. This implies that, in the model without transaction costs, the optimal holdings~$\varphi^0$ in bonds and~$\varphi$ in stocks are always strictly positive. We suppose (and shall later prove) that the same holds true under transaction costs. Then if~$S$ starts a positive excursion from level~$S_{t_0}$ at time~$t_0$, the processes $(m_t)_{t \ge t_0}$, $(\varphi^0_t)_{t \ge t_0}$ and $(\varphi_t)_{t \ge t_0}$ remain constant. The fraction of stocks~$\pi$ starts this positive excursion from~$\pi_{t_0}$, too, until~$S$ reaches some level $\bar{s}>1$, where~$\pi$ is positioned at the selling boundary of the no-trade region. At this time $t_1$, the fraction of wealth held in stocks has evolved to 
\begin{equation}\label{eq:maxfrac}
\frac{\varphi_{t_0}\bar{s}}{\varphi^0_{t_0}+\varphi_{t_0}\bar{s}}=\frac{1}{1+\varphi^0_{t_0}/(\varphi_{t_0}\bar{s})}.
\end{equation}
Now suppose that, during this time interval $[t_0,t_1]$, the process~$\tilde{S}$ is given by
$$\tilde{S_t}=g(S_t),$$
for some $C^2$-function~$g$ that we now want to determine. It\^o's formula  and \eqref{eq:stock} yield
$$\frac{dg(S_t)}{g(S_t)}=\Big(\frac{\mu g'(S_t) S_t+ \frac{\sigma^2}{2} g''(S_t)S_t^2}{g(S_t)}\Big)dt+\Big(\frac{\sigma g'(S_t) S_t}{g(S_t)}\Big)dW_t=:\tilde{\mu}_tdt+\tilde{\sigma}_tdW_t.$$
The mean-variance ratio of the process $\tilde{S}=g(S_t)$ is therefore given by
\begin{equation}\label{eq:prop1}
\frac{\tilde{\mu}_t}{\tilde{\sigma}^2_t}=\frac{g(S_t)[\mu g'(S_t)S_t+ \frac{\sigma^2}{2} g''(S_t)S_t^2]}{\sigma^2 g'(S_t)^2 S_t^2}.
\end{equation}
Let us now consider the fraction~$\tilde{\pi}$ of wealth invested in the stock divided by the total wealth at time~$t$, if we evaluate the stock at price~$\tilde{S}$. We obtain
\begin{equation}\label{eq:prop2}
\tilde{\pi}_t=\frac{\varphi_t \tilde{S}_t}{\varphi^0_t+\varphi_t \tilde{S}_t}=\frac{g(S_t)}{c+g(S_t)},
\end{equation}
where~$c$ is defined by
$$c:=\varphi^0_t/\varphi_t=\varphi^0_{t_0}/\varphi_{t_0}, \quad \mbox{for } t \in [t_0,t_1].$$
Note that~$c$ remains constant as long as~$\tilde{S}_t$ lies in the interior of the bid-ask spread $[(1-\lambda)S_t,S_t]$, i.e., for $t \in [t_0,t_1]$. Indeed, the idea is to construct~$\tilde{S}$ in such a way that the frictionless optimizer $(\varphi^0,\varphi)$ associated to~$\tilde{S}$ only moves on the set $\{\tilde{S}_t=(1-\lambda)S_t\}\cup \{\tilde{S}_t=S_t\}$.

Here comes the decisive argument. Merton's rule~\eqref{eq:merton} tells us that the $\log$-optimal portfolio for the (frictionless) process~$\tilde{S}$ must have the following property: The ratio~\eqref{eq:prop2} of wealth invested in the stock $\varphi_t \tilde{S}_t$ divided by the total wealth $\varphi^0_t+\varphi_t \tilde{S}_t$ must be equal to the mean-variance ratio~\eqref{eq:prop1}. A short calculation shows that this equality is tantamount to the following ODE for~$g$:
\begin{equation}\label{eq:ODE}
g''(s)=\frac{2g'(s)^2}{c+g(s)}-\frac{2\mu g'(s)}{\sigma^2 s}, \quad 1 \le s \le \bar{s}.
\end{equation}
We still need the corresponding boundary conditions. Since the proportion of wealth held in stocks started at the buying boundary at time~$t_0$, the shadow price must equal the higher ask price there, i.e.\ $1=S_{t_0}=\tilde{S}_{t_0}=g(S_{t_0})=g(1)$. Likewise, since the proportion of wealth held in stocks has moved to the selling boundary when the ask price $S_t$ reaches level~$\bar{s}$, we must have $g(\bar{s})=(1-\lambda)\bar{s}$ such that~$\tilde{S}_t$ coincides with the lower ask price $(1-\lambda)S_t$. Since the boundary~$\bar{s}$ is not known a priori, we need some additional boundary conditions, which we can heuristically derive as follows. Since we want~$\tilde{S}_t$ to remain in the bid-ask spread $[(1-\lambda)S_t,S_t]$, the ratio $\tilde{S}_t/S_t=g(S_t)/S_t$ must remain within $[1-\lambda,1]$ as~$S_t$ moves through $[1,\bar{s}]$. Therefore, its diffusion coefficient should tend to zero as~$S_t$ approaches either~$1$ or~$\bar{s}$. It\^o's formula yields that the diffusion coefficient of $g(S_t)/S_t$ is given by $S_t^{-2}(g'(S_t)S_t-g(S_t))$. Together with $g(1)=1$ and $g(\bar{s})=(1-\lambda)\bar{s}$, this implies that we should have $g'(1)=1$ and $g'(\bar{s})=(1-\lambda)$. These are precisely the smooth pasting conditions from  Section~\ref{sec:2}.

Imposing the two boundary conditions $g(1)=g'(1)=1$, the general  \emph{closed-form} solution of the ODE~\eqref{eq:ODE} is given by
\begin{equation}\label{eq:explicit}
g(s)=\frac{-cs+(2\theta-1+2c\theta)s^{2\theta}}{s-(2-2\theta+c(2\theta-1))s^{2\theta}},
\end{equation}
unless $\theta = \frac{1}{2}$, which is a special case that can be treated analogously (cf.\ Lemma~\ref{lem:g} below). For given $\lambda>0$, it remains to determine~$\bar{s}$ and~$c$ such that $g(\bar{s})=(1-\lambda)\bar{s}$ and $g'(\bar{s})=(1-\lambda)$. This is equivalent to requiring $g(\bar{s})=(1-\lambda)\bar{s}$ and  $g(\bar{s})=\bar{s} g'(\bar{s})$. Plugging~\eqref{eq:explicit} into the latter condition yields
\begin{equation}\label{eq:heuristics}
\bar{s}=\bar{s}(c)=\left(\frac{c}{(2\theta-1+2c\theta)(2-2\theta-c(2\theta-1))}\right)^{1/(2\theta-1)}.
\end{equation}
To determine $c$ from the Merton proportion $\theta=\mu/\sigma^2$ and the transaction costs $\lambda$, insert~\eqref{eq:explicit} and~\eqref{eq:heuristics} into the remaining condition $g(\bar{s})=(1-\lambda)\bar{s}$. We find that~$c$ must solve
\begin{equation}\label{eq:heuristicc}
 \left(\frac{c}{(2\theta-1+2c\theta)(2-2\theta-c(2\theta-1)}\right)^{\frac{1-\theta}{\theta-1/2}}-\frac{1}{1-\lambda}(2\theta-1+2c\theta)^2=0.
 \end{equation}
Once we have determined~$c$, this yields~$\bar{s}$ and, via~\eqref{eq:minfrac} and~\eqref{eq:maxfrac}, the lower resp.\ upper limits $1/(1+c)$ and 
$1/(1+c/\bar{s})$ for the fraction~$\pi$ of total wealth held in stocks in terms of the ask price~$S$. It does not seem to be possible to determine~$c$ in closed form from~\eqref{eq:heuristicc} as a function of~$\lambda$. However, the above representation easily leads to fractional Taylor expansions in terms of $\lambda>0$ for~$c$, $\bar{s}$ and the lower resp.\ upper limits $1/(1+c)$ and $1/(1+c/\bar{s})$ for~$\pi_t$. This completes the heuristics for the case $\theta \in (0,1) \backslash \{\frac{1}{2}\}$. As mentioned above, the case $\theta=\frac{1}{2}$ can be dealt with in an analogous way except for a different solution of the ODE for~$g$ (see Lemma~\ref{lem:g} below). 

Now consider a Merton proportion $\theta=\mu/\sigma^2 \in (1,\infty)$. In this case, the $\log$-investor in the price process~$S$ without transaction costs goes short in the bond, i.e., chooses $\varphi^0<0$ and $\varphi>0$. We again suppose (and subsequently verify in Section~\ref{sec:6}) that this remains true in the presence of transaction costs. Then if~$S$ starts a \emph{negative} excursion from level~$S_{t_0}$ at time~$t_0$, the processes $(m_t)_{t \ge t_0}$, $(\varphi^0_t)_{t \ge t_0}$ and $(\varphi_t)_{t \ge t_0}$ remain constant. The fraction~$\pi$ of stocks in turn starts a \emph{positive} excursion from~$\pi_{t_0}$, until~$S$ reaches some level $\bar{s}<1$, where~$\pi$ is positioned at the higher selling boundary of the no-trade region (see Figure~\ref{fig:g2}). The remaining arguments from above can now be carried through accordingly, by replacing $[1,\bar{s}]$ with $[\bar{s},1]$. Consequently, one ends up precisely in the setup of Remark~\ref{rem:symmetric}.

Finally, consider the degenerate case $\theta=\mu/\sigma^2=1$. Then the ODE~\eqref{eq:ODE} for~$g$ complemented with the boundary conditions $g(1)=g'(1)=1$ already implies $g(s)=s$ and $c=0$. Since the other boundary condition $g(\bar{s})=(1-\lambda)\bar{s}$ and $g'(\bar{s})=1-\lambda$ cannot hold in this case (except for $\lambda=0$), we formally have $\bar{s}=\infty$. This means that the shadow price~$\tilde{S}$ coincides with the ask price~$S$ and the corresponding optimal fraction of wealth held in stock evaluated at price $\tilde{S}=S$ is constantly equal to one, since the lower and upper boundaries $1/(1+c)$ and $1/(1+c/\bar{s})$ both become~$1$ for $c = 0$ and $\bar{s} = \infty$. This is also evident from an economic point of view, since Merton's rule~\eqref{eq:merton} implies that the optimal strategy for $\tilde{S}=S$ without transaction costs consists of refraining from any trading after converting the entire initial endowment into stocks at time zero.

\section{Existence of the candidates}\label{sec:4}

To show that the heuristics from the previous section indeed lead to well-defined objects, we begin with the following elementary observations. Their straightforward but tedious proofs are deferred to Appendix~\ref{appendix}.

\begin{lemma}\label{l:existence}
Fix $0 < \theta \neq 1$, and let
\begin{align*}
&f(c)=\\
 &\begin{cases}
 \left(\frac{c}{(2\theta-1+2c\theta)(2-2\theta-c(2\theta-1)}\right)^{\frac{1-\theta}{\theta-1/2}}-\frac{1}{1-\lambda}(2\theta-1+2c\theta)^2, &\mbox{if  $\theta \in (0,\infty) \backslash \{\frac{1}{2},1\}$},\\
\exp\left(\frac{c^2-1}{c}\right)-\frac{1}{1-\lambda}c^2, &\mbox{if $\theta=\frac{1}{2}$}.
\end{cases}
\end{align*}
 Then there exists a unique solution to $f(c)=0$ on $(\frac{1-\theta}{\theta},\infty)$ if $\theta \in (0,\frac{1}{2}]$, on $(\frac{1-\theta}{\theta},\frac{1-\theta}{\theta-1/2})$ if $\theta \in (\frac{1}{2},1)$, resp.\ on $(\frac{1-\theta}{\theta},0)$ if $\theta>1$.
\end{lemma}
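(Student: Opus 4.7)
The plan is to obtain existence via the intermediate value theorem and uniqueness by showing that $f'(c^*) > 0$ at every root $c^*$. I treat the generic case $\theta \in (0,\infty) \setminus \{\tfrac{1}{2}, 1\}$; the case $\theta = \tfrac{1}{2}$ is analogous. Set $a(c) := 2\theta - 1 + 2c\theta$, $b(c) := 2 - 2\theta - c(2\theta - 1)$, $A(c) := c/(a(c)b(c))$, $B(c) := a(c)^2/(1-\lambda)$, and $\gamma := (1-\theta)/(\theta - \tfrac{1}{2})$, so that $f(c) = A(c)^\gamma - B(c)$. A direct sign check confirms that on each of the three intervals $a$, $b$, $c$ have constant signs with $abc > 0$, whence $A > 0$ and $f$ is continuous.

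For the lower endpoint $c_0 := (1-\theta)/\theta$, substitution gives $a(c_0) = 1$ and $b(c_0) = (1-\theta)/\theta$, hence $A(c_0) = 1$ and $f(c_0) = 1 - 1/(1-\lambda) = -\lambda/(1-\lambda) < 0$. At the other endpoint the analysis splits: for $\theta \in (0, \tfrac{1}{2})$, as $c \to \infty$, $A(c) \sim 1/(2\theta(1-2\theta)c)$ combined with $-\gamma = (1-\theta)/(\tfrac12 - \theta) > 2$ shows $A^\gamma$ outgrows the $O(c^2)$ second term; for $\theta \in (\tfrac{1}{2}, 1)$, as $c \uparrow (1-\theta)/(\theta-\tfrac{1}{2})$, the factor $b(c) \to 0^+$ forces $A(c) \to \infty$ with $\gamma > 0$; and for $\theta > 1$, as $c \uparrow 0$, $A(c) \to 0^+$ with $\gamma < 0$. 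In each case $f \to +\infty$, so the intermediate value theorem yields a root. The $\theta = \tfrac{1}{2}$ case follows identically using that $\exp((c^2-1)/c)$ dominates $c^2$ as $c \to \infty$.

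For uniqueness, the central claim is that at any root $c^*$,
\[
f'(c^*) \;=\; \frac{4\, B(c^*)\, \bigl((1-\theta) - \theta c^*\bigr)^2}{a(c^*)\, b(c^*)\, c^*}.
\]
Differentiating $f$ logarithmically and using $A(c^*)^\gamma = B(c^*)$ gives $f'(c^*) = B(c^*)\bigl[\gamma A'/A - B'/B\bigr]_{c^*}$; the identities $a' = 2\theta$, $b' = -(2\theta-1)$, $a+b = 1+c$, and $b + (2\theta-1)c = 2(1-\theta)$ then reduce the bracket, after some algebra, to $4((1-\theta) - \theta c)^2/(abc)$. The crux is the polynomial identity $(1-\theta)^2\, a(c) - \theta^2 c\, b(c) = (2\theta-1)\bigl((1-\theta) - \theta c\bigr)^2$, which I expect to be the main computational obstacle. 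Since $abc > 0$, $B > 0$, and $(1-\theta) - \theta c$ vanishes only at the excluded endpoint $c_0$, one concludes $f'(c^*) > 0$; hence $f$ crosses zero only upward, forcing uniqueness. For $\theta = \tfrac{1}{2}$, an analogous direct calculation yields $f'(c^*) = (c^*-1)^2/(1-\lambda) > 0$ at any root in $(1,\infty)$, completing the argument.
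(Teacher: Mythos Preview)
Your argument is correct and follows a genuinely different route from the paper. Existence is handled identically in both---intermediate value theorem with $f(c_0)=-\lambda/(1-\lambda)<0$ at the left endpoint and $f\to+\infty$ at the right. For uniqueness, however, the paper proceeds by analysing the global shape of~$f$: it computes $f'''(c)$ explicitly, reduces its sign to that of a degree-six polynomial~$k(c)$, and then invokes Cylindrical Algebraic Decomposition (or ``tedious calculations'') to show $k>0$ on the relevant interval; from $f'''>0$ together with $f,f',f''$ all negative at~$c_0$ it deduces that $f'$ changes sign exactly once, forcing a unique root. Your approach bypasses all of this: by evaluating $f'$ \emph{only at roots} and using the constraint $A^\gamma=B$ to eliminate the transcendental part, you reduce to the quadratic identity $(1-\theta)^2 a - \theta^2 c\,b = (2\theta-1)\bigl((1-\theta)-\theta c\bigr)^2$, which is verifiable by hand. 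The payoff is a substantially shorter and more elementary proof with no appeal to CAD; the paper's approach, while heavier, yields more global information about the shape of~$f$ (convexity of $f'$, location of the minimum), though none of that is actually needed for the lemma as stated.
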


For fixed $0<\theta \neq 1$ and~$c$ as in Lemma~\ref{l:existence}, we can now define the real number~$\bar{s}$ as motivated in the heuristics for $\theta \neq \frac{1}{2}$.

\begin{lemma}\label{lem:s}
Fix $0 < \theta \neq 1$. Then for~$c$ as in Lemma~\ref{l:existence},
\begin{equation}\label{eq:s}
\bar{s}=\begin{cases}  \left(\dfrac{c}{(2\theta-1+2c\theta)(2-2\theta-c(2\theta-1)}\right)^{1/(2\theta-1)} &\mbox{if $\theta \in (0,\infty) \backslash \{\frac{1}{2},1\}$},\\ 
\exp\left(\dfrac{c^2-1}{c}\right)
\vphantom{X^{X^{X^{X^{X^{X^{X}}}}}}}, 
&\mbox{if $\theta=\frac{1}{2}$} \end{cases}
\end{equation}
is well-defined and lies in $(1,\infty)$ if $\theta \in (0,1)$, resp.\ in $(0,1)$ if $\theta \in (1,\infty)$. Moreover, we have $c/\bar{s} \in (0,\infty)$ if $\theta \in (0,1)$ resp.\ $c/\bar{s} \in (-1,0)$ if $\theta>1$. 
\end{lemma}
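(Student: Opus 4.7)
The plan is to proceed by case analysis on $\theta$, splitting into $\theta\in(0,\tfrac{1}{2})$, $\theta=\tfrac{1}{2}$, $\theta\in(\tfrac{1}{2},1)$, and $\theta>1$. In each case I verify in turn that the base of the power defining $\bar{s}$ is strictly positive (so that $\bar{s}$ is a well-defined positive real), that $\bar{s}$ lies on the correct side of $1$, and that $c/\bar{s}$ satisfies the claimed bound. The key reduction, valid whenever $\theta\neq\tfrac{1}{2}$, is to introduce $A:=2\theta-1+2c\theta$ and $D:=2-2\theta-c(2\theta-1)$, and to observe that the defining equation $f(c)=0$ is equivalent to $\bar{s}^{2(1-\theta)}=A^2/(1-\lambda)$, simply because the exponent $(1-\theta)/(\theta-1/2)$ in $f$ equals $2(1-\theta)/(2\theta-1)$, so the first term of $f$ is $\bar{s}^{2(1-\theta)}$. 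Once one knows $A>0$, this gives the clean identity
\begin{equation*}
\bar{s}^{\,1-\theta}=\frac{A}{\sqrt{1-\lambda}},
\end{equation*}
which drives the rest of the proof.

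For well-definedness and the sign of $\bar{s}-1$: the bound $c>(1-\theta)/\theta$ from Lemma~\ref{l:existence} rearranges to $2c\theta>2-2\theta$, giving $A>1$ in every case. Positivity of $D$ in the regime $\theta<1$ is easy: for $\theta<\tfrac{1}{2}$ both summands are nonnegative, while for $\theta\in(\tfrac{1}{2},1)$ the upper constraint $c<(1-\theta)/(\theta-\tfrac{1}{2})$ rearranges to $c(2\theta-1)<2-2\theta$. For $\theta>1$ one instead checks that $D$ is strictly negative by noting that it is linear in $c$ and computing the endpoint values $D|_{c=0}=2-2\theta$ and $D|_{c=(1-\theta)/\theta}=(1-\theta)/\theta$, both of which are negative. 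In all cases this yields $B(c):=c/(AD)>0$, so $\bar{s}=B(c)^{1/(2\theta-1)}$ is a well-defined positive real. Since $A>1>\sqrt{1-\lambda}$, the displayed identity forces $\bar{s}^{1-\theta}>1$, which translates to $\bar{s}>1$ when $\theta<1$ and $\bar{s}<1$ when $\theta>1$. The case $\theta=\tfrac{1}{2}$ is handled directly from $\bar{s}=\exp((c^2-1)/c)$ using $c>1$.

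Finally, the bound on $c/\bar{s}$ is immediate for $\theta<1$ since $c>0$ and $\bar{s}>0$. The one remaining point, and the main (albeit mild) obstacle, is to show $c/\bar{s}>-1$ when $\theta>1$. From $B(c)=\bar{s}^{\,2\theta-1}$ we have $c/\bar{s}=AD\bar{s}^{\,2(\theta-1)}$, and substituting $\bar{s}^{\,2(\theta-1)}=(1-\lambda)/A^2$ from the displayed identity gives $c/\bar{s}=(1-\lambda)D/A$. Since $A>0$, the inequality $c/\bar{s}>-1$ is therefore equivalent to $A+(1-\lambda)D>0$, and a short expansion yields
\begin{equation*}
A+(1-\lambda)D=1+2\lambda(\theta-1)+c\bigl(1+\lambda(2\theta-1)\bigr).
\end{equation*}
This is linear in $c$ with positive slope, and at the two endpoints $c=(1-\theta)/\theta$ and $c=0$ it equals $(1+\lambda(\theta-1))/\theta$ and $1+2\lambda(\theta-1)$ respectively, both strictly positive for $\theta>1$. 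Hence $A+(1-\lambda)D>0$ throughout the open interval, which completes the argument.
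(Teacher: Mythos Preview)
Your proof is correct, and it is genuinely different from (and cleaner than) the paper's argument.

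For the range of $\bar{s}$, the paper establishes by Cylindrical Algebraic Decomposition that the base $B(c)=c/(AD)$ lies in $(0,1)$ for $\theta\in(0,\tfrac12)\cup(1,\infty)$ and in $(1,\infty)$ for $\theta\in(\tfrac12,1)$, and then reads off $\bar{s}>1$ (resp.\ $\bar{s}\in(0,1)$) from the sign of the exponent $1/(2\theta-1)$. You instead observe that the first term of $f$ is exactly $\bar{s}^{\,2(1-\theta)}$, so that $f(c)=0$ yields the clean identity $\bar{s}^{\,1-\theta}=A/\sqrt{1-\lambda}$; combined with the elementary fact $A>1$ (from $c>(1-\theta)/\theta$), this immediately gives $\bar{s}^{\,1-\theta}>1$ without any appeal to CAD.

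For the bound $c/\bar{s}>-1$ when $\theta>1$, the contrast is even sharper. The paper writes $c/\bar{s}$ in terms of $c$ alone, bounds it below by $-h(c)$ with $h(c)=(-c)^{1-1/(2\theta-1)}(2\theta-1+2c\theta)^{2/(2\theta-1)}$, locates the maximizer $c_0$ of $h$ by calculus, and then shows $h(c_0)<1$ via an exponential estimate. You sidestep all of this: from $c=ABD$ and the identity $\bar{s}^{\,2(\theta-1)}=(1-\lambda)/A^2$ you obtain the \emph{exact} formula $c/\bar{s}=(1-\lambda)D/A$, which reduces the inequality to $A+(1-\lambda)D>0$, a linear expression in $c$ checked at the two endpoints. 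This is a much shorter and more transparent route; the price is that it leans on the equation $f(c)=0$, whereas the paper's argument for $c/\bar{s}>-1$ works for all $c$ in the interval irrespective of $\lambda$.
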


Now we can verify by insertion that the candidate function~$g$ has the properties derived in the heuristics above.

\begin{lemma}\label{lem:g}
For $0 < \theta \neq 1$ as well as $c$ and $\bar{s}$ as in Lemmas~\ref{l:existence} resp.~\ref{lem:s}, define
\begin{equation}\label{eq:def g}
g(s):=\begin{cases}  \dfrac{-c s + ({2\theta}-1+2c \theta) s^{2\theta}}{s -(2-2\theta-c(2\theta-1)) s^{2\theta}} &\mbox{if $\theta \in (0,\infty) \backslash \{\frac{1}{2},1\}$},\\  \dfrac{\vphantom{X^{X^{X^X}}}(c+1)+c \log(s)}{c+1-\log(s)} &\mbox{if   $\theta=\frac{1}{2}$,}\end{cases}
\end{equation}
on $[1,\bar{s}]$ if $\theta \in (0,1)$, resp.\ on $[\bar{s},1]$ if $\theta \in (1,\infty)$. Then $g'>0$. Moreover, $g$ takes values in $[1,(1-\lambda)\bar{s}]$ (for $\theta \in (0,1)$) resp.\  $[(1-\lambda)\bar{s},1]$ (for $\theta \in (1,\infty)$), solves the ODE
\begin{equation}\label{eq:odeg}
g''(s)=\frac{2g'(s)^2}{c+g(s)}-\frac{2\theta g'(s)}{s},
\end{equation}
and satisfies the boundary conditions
$$g(1)=g'(1)=1, \quad g(\bar{s})=(1-\lambda)\bar{s}, \quad g'(\bar{s})=1-\lambda.$$
\end{lemma}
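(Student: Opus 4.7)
The plan is to verify each claim of the lemma by direct computation from the explicit formula \eqref{eq:def g}. I focus on the generic case $\theta \in (0,\infty) \setminus \{1/2, 1\}$; the logarithmic case $\theta = 1/2$ should be treated entirely analogously. First, for the boundary conditions at $s=1$, I would write $g = N/D$ with $N(s) = -cs + (2\theta-1+2c\theta)s^{2\theta}$ and $D(s) = s - (2-2\theta-c(2\theta-1))s^{2\theta}$, and compute $N(1) = D(1) = (2\theta-1)(1+c)$ as well as $N'(1) - D'(1) = (2\theta-1)(1+c)$, yielding $g(1) = 1$ and $g'(1) = (N'(1)-D'(1))/D(1) = 1$. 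For the boundary conditions at $\bar{s}$, I would use that the definition \eqref{eq:s} of $\bar{s}$ is exactly the one making $g(\bar{s}) = \bar{s}\,g'(\bar{s})$, cf.\ \eqref{eq:heuristics}; combined with $f(c) = 0$ from Lemma~\ref{l:existence}, this forces $g(\bar{s}) = (1-\lambda)\bar{s}$ and hence $g'(\bar{s}) = g(\bar{s})/\bar{s} = 1-\lambda$.

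To verify the ODE \eqref{eq:odeg}, my plan is to pass through the first-order relation
$$g'(s) = \frac{(c+g(s))^2}{(c+1)^2\, s^{2\theta}}.$$
Dividing \eqref{eq:odeg} by $g'$ (on the set where $g' > 0$) gives $(\log g')' = 2(\log(c+g))' - 2\theta(\log s)'$, so any $C^2$ solution with $g(1)=g'(1)=1$ must satisfy this relation. Conversely, differentiating the relation and using it to eliminate $(c+g)^2/((c+1)^2 s^{2\theta})$ in favor of $g'$ recovers \eqref{eq:odeg} exactly. Hence it suffices to check the first-order identity for the explicit formula, which reduces to a (tedious but routine) algebraic manipulation of $N/D$ and $(N'D-ND')/D^2$.

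Finally, for the strict positivity $g' > 0$ and the range of $g$, the first integral immediately gives $g' \geq 0$ with equality if and only if $c + g(s) = 0$. Substituting $g = -c$ into the formula \eqref{eq:def g} and clearing denominators yields a polynomial identity in $s$ that simplifies to $(1-2\theta)(1+c)^2 = 0$, which is excluded by our standing assumptions $\theta \neq 1/2$ and $c > -1$ (the latter following from the range of $c$ in Lemma~\ref{l:existence}, since $(1-\theta)/\theta > -1$ for all $\theta > 0$). Thus $g' > 0$ strictly, and combining this monotonicity with $g(1) = 1$ and $g(\bar{s}) = (1-\lambda)\bar{s}$ pins down the range of $g$ as $[1,(1-\lambda)\bar{s}]$ for $\theta \in (0,1)$ and $[(1-\lambda)\bar{s},1]$ for $\theta > 1$. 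The main computational obstacle is the algebraic verification of the first integral from the closed form \eqref{eq:def g}; everything else is a short reduction to the definitions of $c$ and $\bar{s}$.
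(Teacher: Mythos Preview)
Your proposal is correct and follows the paper's approach, which is simply ``verify by insertion'' (the paper gives no further details). Your use of the first integral $g'(s)=(c+g(s))^2/((c+1)^2 s^{2\theta})$ is a clean way to organize the computation: it simultaneously yields the ODE~\eqref{eq:odeg} and the strict positivity $g'>0$, since $c+g(s)=(2\theta-1)(1+c)^2 s^{2\theta}/D(s)$ never vanishes; the only point you might make explicit is that this same identity, together with $D(1)=(2\theta-1)(1+c)\neq 0$, also shows the denominator $D$ does not vanish on the interval, so the formula~\eqref{eq:def g} is well-defined throughout.
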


\section{The shadow price process and its log-optimal portfolio}\label{sec:6}

With Proposition~\ref{proG5} and the function $g$ from Lemma~\ref{lem:g} at hand, we can now construct a shadow price $\tilde{S}$ and determine its $\log$-optimal portfolio. To this end, let $g$ be the function from Lemma~\ref{lem:g}. Then, for the process $m$ as defined in Section~\ref{sec:2}, Proposition~\ref{proG5} yields that 
$$\tilde{S}_t:=m_t g\left(\tfrac{S_t}{m_t}\right)$$
is an It\^o process satisfying the stochastic differential equation
\begin{equation}\label{eq:sde}
d\tilde{S}_t/\tilde{S_t}=\tilde{\mu}\left(\tfrac{S_t}{m_t}\right)dt+\tilde{\sigma}\left(\tfrac{S_t}{m_t}\right)dW_t, \quad \tilde{S}_0=1,
\end{equation}
with drift and diffusion coefficients  
\begin{align*}
\tilde{\mu}(s) = \frac{\sigma^2 g'(s)^2 s^2}{g(s)(c+g(s))}, \qquad \tilde{\sigma}(s) =\frac{\sigma g'(s) s}{g(s)} .
\end{align*}
Note that we have replaced $g''$ in Proposition~\ref{proG5} with the expression provided by the ODE~\eqref{eq:odeg} from Lemma~\ref{lem:g}. Also notice that $\tilde{\mu}$ and $\tilde{\sigma}$ are continuous and hence bounded on $[1,\bar{s}]$,
and~$\tilde{\sigma}$ is also bounded away from zero.

For It\^o processes with bounded drift and diffusion coefficients, the solution to the $\log$-optimal portfolio problem is well-known (cf.\ e.g.~\cite[Example 6.4]{karatzas.al.91}). This leads to the following result.

\begin{theorem}\label{thm:shadow}
Fix $0 < \theta \neq 1$ and let the stopping times $(\varrho_n)_{0 \le n \le \infty}$, $(\sigma_n)_{1 \le n \le \infty}$ and the process $m$ be defined as in Section~\ref{sec:2}. For the function $g$ from Lemma~\ref{lem:g}, set $\tilde{S}_t=m_t g(\frac{S_t}{m_t}).$

Then the $\log$-optimal portfolio $(\varphi^0,\varphi)$ in the frictionless market with price process~$\tilde{S}$ exists and is given by $(\varphi^0_{0-},\varphi_{0-})=(x,0)$, $(\varphi^0_0,\varphi_0)=(\frac{c}{c+1}x,\frac{1}{c+1}x)$ and
$$ 
\varphi^0_t=\begin{cases} \varphi^0_{\varrho_{k-1}}\left(\frac{m_t}{m_{\varrho_{k-1}}}\right)^{\frac{1}{c+1}} &\mbox{on} \  \bigcup_{k=1}^\infty \llbracket \varrho_{k-1},\sigma_k \rrbracket, \\ \varphi^0_{\sigma_k} \left(\frac{m_t}{m_{\sigma_{k}}}\right)^{\frac{(1-\lambda)\bar{s}}{c+(1-\lambda)\bar{s}}} &\mbox{on} \  \bigcup_{k=1}^\infty \llbracket \sigma_{k},\varrho_k \rrbracket, \end{cases}
$$
as well as
$$ 
\varphi_t=\begin{cases} \varphi_{\varrho_{k-1}}\left(\frac{m_t}{m_{\varrho_{k-1}}}\right)^{-\frac{c}{c+1}} &\mbox{on} \ \bigcup_{k=1}^\infty \llbracket \varrho_{k-1},\sigma_k \rrbracket, \\ \varphi_{\sigma_k}\left(\frac{m_t}{m_{\sigma_{k}}}\right)^{-\frac{c}{c+(1-\lambda)\bar{s}}} &\mbox{on} \ \bigcup_{k=1}^\infty \llbracket \sigma_{k},\varrho_k \rrbracket. \end{cases}
$$
 The corresponding optimal fraction of wealth invested into stocks is given by
 \begin{equation}\label{eq:tildepi}
 \tilde{\pi}_t=\frac{\varphi_t \tilde{S}_t}{\varphi^0_t+\varphi_t \tilde{S}_t}=\frac{1}{1+c/g(\frac{S_t}{m_t})}.
 \end{equation}
 \end{theorem}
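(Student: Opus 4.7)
The plan is to invoke the classical log-utility optimization for It\^o processes with bounded coefficients (Example~6.4 of~\cite{karatzas.al.91}) in the frictionless market with price $\tilde{S}$. By Proposition~\ref{proG5} and the SDE~\eqref{eq:sde}, $\tilde{S}$ is an It\^o process whose drift $\tilde{\mu}$ and volatility $\tilde{\sigma}$ are continuous functions of $S_t/m_t \in [1,\bar{s}]$ (or $[\bar{s},1]$ in the symmetric case of Remark~\ref{rem:symmetric}), hence bounded, with $\tilde{\sigma}$ also bounded away from zero. Merton's rule therefore guarantees existence of a log-optimal portfolio whose optimal fraction of wealth in stock is
\[
\tilde{\pi}_t=\frac{\tilde{\mu}(S_t/m_t)}{\tilde{\sigma}(S_t/m_t)^2}=\frac{g(S_t/m_t)}{c+g(S_t/m_t)}=\frac{1}{1+c/g(S_t/m_t)},
\]
by direct substitution into the formulas for $\tilde{\mu},\tilde{\sigma}$. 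This already establishes~\eqref{eq:tildepi}.

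It remains to verify that the explicit $(\varphi^0,\varphi)$ in the statement is self-financing for $\tilde{S}$ and realizes this Merton fraction. The initial jump from $(x,0)$ to $(cx/(c+1),\,x/(c+1))$ is self-financing at the price $\tilde{S}_0=g(1)=1$, and yields $\tilde{\pi}_0=1/(c+1)=1/(1+c/g(1))$. On $\llbracket\varrho_{k-1},\sigma_k\rrbracket$ the exponents $1/(c+1)$ and $-c/(c+1)$ are arranged so that the ratio $\varphi^0_t/(\varphi_t m_t)$ is preserved, and on $\llbracket\sigma_k,\varrho_k\rrbracket$ the exponents $(1-\lambda)\bar{s}/(c+(1-\lambda)\bar{s})$ and $-c/(c+(1-\lambda)\bar{s})$ enforce the same invariance. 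Starting from $\varphi^0_0/(\varphi_0 m_0)=c$, a propagation argument across successive intervals gives $\varphi^0_t=c\,\varphi_t m_t$ for all $t$, whence
\[
\frac{\varphi_t\tilde{S}_t}{\varphi^0_t+\varphi_t\tilde{S}_t}=\frac{m_t g(S_t/m_t)}{c\,m_t+m_t g(S_t/m_t)}=\frac{1}{1+c/g(S_t/m_t)},
\]
matching the Merton fraction everywhere.

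The key step is the self-financing identity $d\varphi^0_t+\tilde{S}_t\,d\varphi_t=0$ (the covariation term vanishes since $\varphi$ is of finite variation and $\tilde{S}$ is continuous). On $\llbracket\varrho_{k-1},\sigma_k\rrbracket$, both $d\varphi^0$ and $d\varphi$ are absolutely continuous with respect to $dm$, which is supported on $\{S_t=m_t\}$; there $\tilde{S}_t=m_t g(1)=m_t$. Differentiating the explicit formulas yields $d\varphi^0_t=\tfrac{\varphi^0_t}{(c+1)m_t}\,dm_t$ and $d\varphi_t=-\tfrac{c\,\varphi_t}{(c+1)m_t}\,dm_t$, which sum to zero once the invariance $\varphi^0_t=c\varphi_t m_t$ is inserted. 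The analogous calculation on $\llbracket\sigma_k,\varrho_k\rrbracket$ uses that $dm$ is supported on $\{S_t=M_t\}$, where $\tilde{S}_t=m_t g(\bar{s})=(1-\lambda)\bar{s}\,m_t$; the coefficient $(1-\lambda)\bar{s}$ now cancels with $g(\bar{s})$ in precisely the right way. Continuity at the junctions $\sigma_k$ and $\varrho_k$ is ensured by $g(1)=1$ and $g(\bar{s})=(1-\lambda)\bar{s}$. Since the resulting self-financing portfolio attains the Merton fraction at all times, its wealth process coincides with that of the log-optimizer, identifying it as such. The main obstacle is the bookkeeping around the singular sets $\{S=m\}$ and $\{S=M\}$: one must pair the correct pair of exponents with the correct identification of $\tilde{S}$ on each set, after which the cancellations are forced by the smooth-pasting conditions.
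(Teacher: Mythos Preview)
Your proposal is correct and follows essentially the same approach as the paper: invoke Merton's rule via \cite[Example~6.4]{karatzas.al.91} to identify the optimal fraction $\tilde{\mu}/\tilde{\sigma}^2=1/(1+c/g(S_t/m_t))$, establish the invariance $\varphi^0_t=c\,m_t\varphi_t$ from the explicit formulas, and verify $d\varphi^0_t+\tilde{S}_t\,d\varphi_t=0$ on each stochastic interval using that $dm$ is carried by $\{S_t=m_t\}$ (resp.\ $\{S_t=\bar{s}m_t\}$), where $\tilde{S}_t=m_t$ (resp.\ $(1-\lambda)\bar{s}\,m_t$). If anything, you are slightly more explicit than the paper about the support of $dm$ and the role of the boundary values $g(1)=1$, $g(\bar{s})=(1-\lambda)\bar{s}$ in the cancellation.
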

 
\begin{proof}
By~\eqref{eq:sde}, $\tilde{S}$ is an It\^o process with bounded coefficients. Since, moreover, $\tilde{\mu}/\tilde{\sigma}^2$ is also bounded, Merton's rule as in~\cite[Example 6.4]{karatzas.al.91} implies that the optimal proportion of wealth invested into stocks is given by
$$\frac{\tilde{\mu}(\frac{S_t}{m_t})}{\tilde{\sigma}^2(\frac{S_t}{m_t})}=\frac{1}{1+c/g(\frac{S_t}{m_t})}.$$
On the other hand, the adapted process $(\varphi_t^0,\varphi_t)_{ t \ge 0}$ is continuous and hence predictable. By definition, 
\begin{equation}\label{eq:rec}
\varphi^0_t=c m_t \varphi_t, \quad t \ge 0.
\end{equation}
For any $k \in \mathbb{N}$, It\^o's formula and~\eqref{eq:rec} now yield
$$d\varphi^0_t+\tilde{S}_td\varphi_t= \left[\left(\frac{m_t}{m_{\varrho_{k-1}}}\right)^{-c/(c+1)}\frac{1}{c+1}\left(\frac{\varphi^0_{\varrho_{k-1}}}{m_{\varrho_{k-1}}}-c\varphi_{\varrho_{k-1}}\right)\right]dm_t=0,$$
on $\llbracket \rho_{k-1},\sigma_k \rrbracket$ and likewise on $\llbracket \sigma_k, \rho_k \rrbracket$. Therefore $(\varphi^0,\varphi)$ is self-financing. Again by~\eqref{eq:rec}, the fraction
$$\frac{\varphi_t \tilde{S_t}}{\varphi^0_t+\varphi_t \tilde{S}_t}=\frac{1}{1+c/g(\frac{S_t}{m_t})}$$
of wealth invested into stocks when following $(\varphi^0,\varphi)$ coincides with the Merton proportion computed above. Hence $(\varphi^0,\varphi)$ is $\log$-optimal and we are done.
\end{proof}

 In view of \eqref{eq:tildepi} and Lemma \ref{lem:g}, it is optimal in the frictionless market with price process $\tilde{S}$ to keep the fraction $\tilde{\pi}$ of wealth in terms of $\tilde{S}$ invested into stocks in the interval $[(1+c)^{-1},(1+c/((1-\lambda)\bar{s}))^{-1}]$. By definition of $\varphi^0$ and $\varphi$, no transactions take place while $\tilde{\pi}$ moves in the interior of this \emph{no-trade region in terms of $\tilde{S}$}. 
 
 As was kindly pointed out to us by Paolo Guasoni, the no-trade region in terms of $\tilde{S}$ is \emph{symmetric relative to the Merton proportion $\theta$}. Indeed, after inserting $(1-\lambda)\bar{s}=g(\bar{s})$, \eqref{eq:def g}, and~\eqref{eq:s}, rearranging yields that $(1+c)^{-1}+(1+c/((1-\lambda)\bar{s}))^{-1}=2\theta$. Hence
 \begin{equation}\label{eq:symmetric}
 \theta-\frac{1}{1+c}=\frac{1}{1+c/((1-\lambda)\bar{s})}-\theta.
 \end{equation}

{}From Theorem~\ref{thm:shadow} we can now obtain that $\tilde{S}$ is a shadow price.

\begin{corollary}\label{cor:shadow}
For $0 < \theta \neq 1$, the process $\tilde{S}$ from Theorem~\ref{thm:shadow} is a shadow price in the sense of Definition~\ref{def:shadow} for the bid-ask process $((1-\lambda)S,S)$. For $\theta=1$, the same holds true by simply setting $\tilde{S}=S$.
\end{corollary}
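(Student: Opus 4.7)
The plan is to verify the defining properties of Definition~\ref{def:shadow} one by one. Continuity, the initial condition $\tilde{S}_0=S_0=1$, the semimartingale property, and the containment $\tilde{S}\in[(1-\lambda)S,S]$ are immediate from Proposition~\ref{proG5} combined with Lemma~\ref{lem:g}. Existence of the frictionless $\log$-optimal portfolio $(\varphi^0,\varphi)$ for $\tilde{S}$ is supplied by Theorem~\ref{thm:shadow}. Thus only two things remain to be checked: $\varphi$ must be of finite variation, and it must increase (resp.\ decrease) only on $\{\tilde{S}_t=S_t\}$ (resp.\ $\{\tilde{S}_t=(1-\lambda)S_t\}$).

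For $\theta\in(0,1)$, where $c>0$ and $\bar{s}>1$ by Lemmas~\ref{l:existence} and~\ref{lem:s}, I would localize to the stochastic intervals constructed in Section~\ref{sec:2}. On $\llbracket\varrho_{k-1},\sigma_k\rrbracket$, the explicit formula $\varphi_t=\varphi_{\varrho_{k-1}}(m_t/m_{\varrho_{k-1}})^{-c/(c+1)}$ expresses $\varphi$ as a continuous, strictly monotone function of the finite-variation process $m$, so $\varphi$ is of finite variation and moves exactly where $m$ does. On this interval $m$ is the running minimum of $S$, hence decreases only on $\{S_t=m_t\}$; there $\tilde{S}_t=m_tg(1)=m_t=S_t$ by Lemma~\ref{lem:g}, so this set is contained in $\{\tilde{S}_t=S_t\}$. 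The exponent $-c/(c+1)$ is negative, and $m$ is decreasing, so $\varphi$ is non-decreasing and increases only on $\{\tilde{S}_t=S_t\}$, as required. On $\llbracket\sigma_k,\varrho_k\rrbracket$, the extension $m_t=M_t/\bar{s}$ makes $m$ a rescaled running maximum, which increases only on $\{M_t=S_t\}=\{S_t/m_t=\bar{s}\}$; there $\tilde{S}_t=m_tg(\bar{s})=(1-\lambda)m_t\bar{s}=(1-\lambda)S_t$. The denominator $c+(1-\lambda)\bar{s}$ is positive, so the exponent $-c/(c+(1-\lambda)\bar{s})$ is negative; hence $\varphi$ is non-increasing and decreases only on $\{\tilde{S}_t=(1-\lambda)S_t\}$.

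For $\theta\in(1,\infty)$ the argument is the mirror image, carried out in the setup of Remark~\ref{rem:symmetric}: here $c\in(-1,0)$, $\bar{s}\in(0,1)$, the roles of running minimum and running maximum are swapped, and $c/\bar{s}\in(-1,0)$ by Lemma~\ref{lem:s}. The sign flip in $c$ compensates for the reversed monotonicity of $m$ on each type of interval, so that both exponents acquire the correct sign to force $\varphi$ to increase on $\{\tilde{S}_t=S_t\}$ (where the running maximum of $S$ is attained) and decrease on $\{\tilde{S}_t=(1-\lambda)S_t\}$. The degenerate case $\theta=1$ is trivial: setting $\tilde{S}=S$ and noting that Merton's rule dictates $\tilde{\pi}\equiv 1$ gives a constant portfolio $\varphi\equiv x$ after a single purchase at $t=0$, at which time $\tilde{S}_0=S_0$, so every requirement of Definition~\ref{def:shadow} is satisfied vacuously on $(0,\infty)$.

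There is no deep obstacle; the main work is careful sign bookkeeping across the three regimes of $\theta$, using the sign information on $c$, $\bar{s}$, and $c/\bar{s}$ supplied by Lemma~\ref{lem:s}, together with the boundary values $g(1)=1$ and $g(\bar{s})/\bar{s}=1-\lambda$ from Lemma~\ref{lem:g} to identify the level sets $\{\tilde{S}=S\}$ and $\{\tilde{S}=(1-\lambda)S\}$ as $\{S=m\}$ and $\{S/m=\bar{s}\}$, respectively.
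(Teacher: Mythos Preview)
Your proof is correct and follows essentially the same approach as the paper's own proof, which invokes Proposition~\ref{proG5} for the containment $\tilde{S}\in[(1-\lambda)S,S]$, Theorem~\ref{thm:shadow} for the existence and explicit form of $(\varphi^0,\varphi)$, and then reads off from the formulas that $\varphi$ moves only where $m$ does, identifying $\{S_t=m_t\}=\{\tilde{S}_t=S_t\}$ and $\{S_t=\bar{s}m_t\}=\{\tilde{S}_t=(1-\lambda)S_t\}$. You supply more explicit sign bookkeeping on the exponents than the paper (which simply writes ``by definition of~$\varphi$''), but the logic is the same.
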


\begin{proof}
First consider the case $0 < \theta \neq 1$. In view of Proposition~\ref{proG5}, $\tilde{S}=m g(S/m)$ takes values in the bid-ask spread $[(1-\lambda)S,S]$. By Theorem~\ref{thm:shadow}, the $\log$-optimal portfolio $(\varphi^0,\varphi)$ for $\tilde{S}$ exists. Moreover, since $m$ only increases (resp.\ decreases) on $\{S_t=\bar{s}m_t\}$ (resp.\ $\{S_t=m_t\}$), the number of stocks $\varphi$ only increases (resp.\ decreases) on $\{S_t=m_t\}=\{ \tilde{S}_t=S_t\}$ (resp.\ $\{S_t=\bar{s}m_t\}=\{\tilde{S}_t=(1-\lambda)S_t\}$) by definition of $\varphi$. This shows that~$\tilde{S}$ is a shadow price. 

For $\theta=1$, it follows from~\cite[Example 6.4]{karatzas.al.91} that the optimal strategy for the frictionless market $\tilde{S}=S$ transfers all wealth into stocks at time $t=0$ and never trades afterwards, i.e.\ $\varphi^0_t=0$ and $\varphi_t=x$ for all $t \ge 0$. Hence it is of finite variation, the number of stocks never decreases and only increases at time $t=0$ where $\tilde{S}_0=S_0$. This completes the proof. 
\end{proof}

If $\theta \in (0,1)$, Corollary~\ref{cor:tac} combined with Corollary~\ref{cor:shadow} shows that $(\varphi^0,\varphi)$ is also growth-optimal for the bid-ask process $((1-\lambda)S,S)$. The corresponding fraction of wealth invested into stocks in terms of the ask price $S$ is given by 
$$\pi=\frac{\varphi S}{\varphi^0+\varphi S}=\frac{1}{1+\varphi^0/(\varphi S)}=\frac{1}{1+\frac{m}{S}c},$$
where we have used $\varphi^0=cm\varphi$ for the last equality. Hence, the fraction $\pi$ is always kept in the \emph{no-trade-region}  $[(1+c)^{-1},(1+c/\bar{s})^{-1}]$ \emph{in term of $S$}. Note that this interval always lies in $(0,1)$, since $c>0$ and $\bar{s}>1$ by Lemmas~\ref{l:existence} and~\ref{lem:s}.

For $\theta=1$, the investor always keeps his entire wealth invested into stocks.

If $\theta \in (1,\infty)$, one cannot apply Corollary~\ref{cor:tac} directly. However, in the present setting a corresponding statement still holds, provided that the transaction costs $\lambda$ are \emph{sufficiently small}.

\begin{lemma}\label{lem:lambdasmall}
Fix $\theta \in (1,\infty)$ and let $(\varphi^0,\varphi)$ be the $\log$-optimal portfolio for the frictionless market with price process $\tilde{S}$ from Theorem~\ref{thm:shadow}. Then there exists $\lambda_0>0$ such that, for all $0<\lambda<\lambda_0$, the portfolio $(\varphi^0,\varphi)$ is also growth-optimal for the bid-ask process $((1-\lambda)S,S)$.
\end{lemma}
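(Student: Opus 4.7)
The plan is to extend the argument of Corollary~\ref{cor:tac} to the case $\theta > 1$, where Lemma~\ref{l:existence} forces $c \in (\tfrac{1-\theta}{\theta}, 0)$, so the $\log$-optimal bond holding $\varphi^0 = c m \varphi$ is negative and the elementary inequality $V(\varphi^0,\varphi) \geq (1-\lambda)\tilde V(\varphi^0,\varphi)$ of that corollary is no longer valid. The guiding heuristic is that as $\lambda \downarrow 0$ the shadow price $\tilde S$ converges to $S$ (with $\bar s \to 1$ and $c \to (1-\theta)/\theta$), so one should still have a weaker estimate $V \geq \kappa(\lambda)\tilde V$ with some constant $\kappa(\lambda) > 0$; this is all that is needed, since the $\log\kappa(\lambda)$ term will be killed on dividing by $T \to \infty$.

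First I would show that $(\varphi^0,\varphi)$ is admissible for $((1-\lambda)S,S)$ when $\lambda$ is below a suitable $\lambda_0$. From Theorem~\ref{thm:shadow}, $\varphi_t > 0$ for all $t$: indeed $\varphi_0 = x/(c+1) > 0$ since $c+1 > 1/\theta > 0$, and $\varphi_t$ is then updated by strictly positive multiplicative factors of the form $(m_t/m_\cdot)^{\alpha}$. Using $\varphi^0_t = c m_t \varphi_t$, one obtains
\begin{equation*}
V_t(\varphi^0,\varphi) = \varphi^0_t + \varphi_t (1-\lambda) S_t = \varphi_t m_t\bigl(c + (1-\lambda)\tfrac{S_t}{m_t}\bigr).
\end{equation*}
In the symmetric setting of Remark~\ref{rem:symmetric} (which applies for $\theta > 1$), the construction of $m$ guarantees $S_t/m_t \in [\bar s, 1]$ for all $t$, whence $V_t(\varphi^0,\varphi) \geq \varphi_t m_t\bigl(c + (1-\lambda)\bar s\bigr)$. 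Lemma~\ref{lem:s} yields $c/\bar s > -1$, i.e., $c + \bar s > 0$ strictly, and therefore $c + (1-\lambda)\bar s > 0$ for every $\lambda$ below some $\lambda_0 > 0$, which gives admissibility.

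With $\tilde V_t(\varphi^0,\varphi) = \varphi_t m_t(c + g(S_t/m_t))$ and $g \leq 1$ on $[\bar s, 1]$ (Lemma~\ref{lem:g}), the pointwise bound
\begin{equation*}
\frac{V_t(\varphi^0,\varphi)}{\tilde V_t(\varphi^0,\varphi)} \geq \frac{c + (1-\lambda)\bar s}{c+1} =: \kappa(\lambda) > 0
\end{equation*}
is immediate. For any admissible competitor $(\psi^0,\psi)$ for $((1-\lambda)S, S)$, the inequality $V(\psi^0,\psi) \leq \tilde V(\psi^0,\psi)$ (valid since $(1-\lambda)S \leq \tilde S \leq S$) combined with the $\log$-optimality of $(\varphi^0,\varphi)$ for $\tilde S$ (Proposition~\ref{prop:shadow}) then yields
\begin{align*}
\E[\log V_T(\varphi^0,\varphi)] &\geq \E[\log \tilde V_T(\varphi^0,\varphi)] + \log \kappa(\lambda) \\
&\geq \E[\log \tilde V_T(\psi^0,\psi)] + \log \kappa(\lambda) \\
&\geq \E[\log V_T(\psi^0,\psi)] + \log \kappa(\lambda).
\end{align*}
Dividing by $T$ and letting $T \to \infty$ eliminates the constant $\log\kappa(\lambda)$ and proves growth optimality. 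The only delicate point is the admissibility step, which hinges on the strict inequality $c + \bar s > 0$ from Lemma~\ref{lem:s} providing enough slack to absorb the $-\lambda \bar s$ perturbation; everything else is routine bookkeeping using only Proposition~\ref{prop:shadow}, the explicit formulas of Theorem~\ref{thm:shadow}, and the sign information on $c$ and $\bar s$ recorded in Lemmas~\ref{l:existence} and~\ref{lem:s}.
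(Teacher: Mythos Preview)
Your approach is essentially the same as the paper's: both establish a lower bound $V_T(\varphi^0,\varphi) \geq \kappa(\lambda)\,\tilde V_T(\varphi^0,\varphi)$ with $\kappa(\lambda)>0$ for small~$\lambda$ and then repeat the argument of Corollary~\ref{cor:tac}. The paper reaches its constant by bounding the stock fraction~$\tilde\pi_T$ (writing $V_T \geq (1 - \lambda\tilde\pi_T)\tilde V_T$ and using $\tilde\pi_T \leq (1+c/((1-\lambda)\bar s))^{-1}$), while you substitute $\varphi^0 = c m\varphi$ directly into $V_t$ and~$\tilde V_t$; the two routes are equivalent and both hinge on the positivity of $c + (1-\lambda)\bar s$.

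There is, however, one genuine imprecision. You write that ``Lemma~\ref{lem:s} yields $c/\bar s > -1$, i.e., $c + \bar s > 0$ strictly, and therefore $c + (1-\lambda)\bar s > 0$ for every~$\lambda$ below some $\lambda_0 > 0$.'' But $c = c(\lambda)$ and $\bar s = \bar s(\lambda)$ both depend on~$\lambda$, so the strict inequality $c(\lambda) + \bar s(\lambda) > 0$ (valid for each fixed~$\lambda$) does not by itself provide uniform slack to absorb the perturbation $-\lambda\,\bar s(\lambda)$; a~priori $c(\lambda)/\bar s(\lambda) + 1$ could tend to zero faster than~$\lambda$. The paper closes exactly this gap by invoking the asymptotic expansions of Proposition~\ref{prop:s c expans}, from which $c(\lambda)\to(1-\theta)/\theta$ and $\bar s(\lambda)\to 1$, hence $c(\lambda) + (1-\lambda)\bar s(\lambda) \to 1/\theta > 0$ as $\lambda\downarrow 0$. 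You already quote these limits in your opening heuristic, so the fix is simply to cite Proposition~\ref{prop:s c expans} (rather than Lemma~\ref{lem:s}) at the admissibility step.
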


\begin{proof}
First note that $c \in (-1,0)$ by Lemma~\ref{lem:s}. Moreover, the function $g$ is increasing and maps $[\bar{s},1]$ to $[(1-\lambda)\bar{s},1]$ by Lemma~\ref{lem:g}. By~\eqref{eq:tildepi}, the fraction $\tilde{\pi}$ of wealth in terms of $\tilde{S}$ invested into stocks therefore takes values in the interval $[(1+c)^{-1},(1+c/((1-\lambda)\bar{s}))^{-1}]$. Together with $\varphi \ge 0$, this yields the estimate
\begin{equation}\label{eq:lowerbound}
\begin{split}
V_T(\varphi^0,\varphi) &\ge \tilde{V}_T(\varphi^0,\varphi)-\lambda \tilde{\pi}_T \tilde{V}_T(\varphi^0,\varphi)\\
                                         &\ge \left(1-\frac{\lambda}{1+c/((1-\lambda)\bar{s})}\right) \tilde{V}_T(\varphi^0,\varphi).
\end{split}                                         
\end{equation}                                         
By Proposition~\ref{prop:s c expans} below, there exists $\lambda_0>0$ such that
$$\left(1-\frac{\lambda}{1+c/((1-\lambda)\bar{s})}\right)>0 \quad \mbox{ for all}\ 0<\lambda<\lambda_0.$$
Since $\tilde{V}(\varphi^0,\varphi)$ is nonnegative, this shows that $(\varphi^0,\varphi)$ is admissible for the bid-ask process $((1-\lambda)S,S)$, if $0<\lambda<\lambda_0$. The remainder of the assertion now follows as in the proof of Corollary~\ref{cor:tac} by combining the above estimate~\eqref{eq:lowerbound} with the obvious upper bound $V_T(\varphi^0,\varphi) \le\tilde{V}_T(\varphi^0,\varphi)$.
\end{proof}

If Lemma \ref{lem:lambdasmall} is in force, the growth-optimal portfolio under transaction costs for $\theta>1$ also keeps the fraction $\pi$ of stocks in terms of the ask price $S$ in the interval $[1/(1+c), 1/(1+c/\bar{s})]$. In particular, since $c \in (-1,0)$ and $c/\bar{s} \in (-1,0)$, this now entails always going short in the bond, i.e.\ both boundaries of the no-trade region lie in the interval $(1,\infty)$.

\subsection{The optimal growth rate}
We now want to compute the optimal growth rate
\begin{equation}\label{A1}
\delta=\limsup_{T\to\i} \frac{1}{T} \E \left[\log(\tilde{V}_T(\varphi^0,\varphi))\right]=\limsup_{T\to\i} \frac{1}{T} \E \left[\int_0^T \frac{\tilde{\mu}^2(\frac{S_t}{m_t})}{2\tilde{\sigma}^2(\frac{S_t}{m_t})}dt\right],
\end{equation} 
where $(\varphi^0,\varphi)$ denotes the $\log$-optimal portfolio for the shadow price $\tilde{S}$ from Theorem~\ref{thm:shadow} and the second equality follows from~\cite[Example 6.4]{karatzas.al.91}. In view of Corollary~\ref{cor:tac} resp.\ Lemma \ref{lem:lambdasmall}, the above constant $\delta$ coincides with the optimal growth rate for the bid-ask process $((1-\lambda)S,S)$.

It follows from the construction in Section~\ref{sec:2} above that the process $S/m$ is a doubly reflected geometric Brownian motion with drift on the interval $[1,\bar{s}]$ (resp.\ on $[\bar{s},1]$ for the case $\theta>1$). Therefore, an ergodic theorem for positively recurrent one-dimensional diffusions (cf., e.g.,~\cite[Sections II.36 and II.37]{borodin.salminen.02}) and elementary integration yield the following result.

\begin{proposition}
Suppose the conditions of Theorem~\ref{thm:shadow} hold. Then the process $S/m$ has stationary distribution
$$\nu(ds)= \begin{cases} {\displaystyle\frac{2\theta-1}{\bar{s}^{2\theta-1}-1}   s^{2\theta-2} \I_{[1,\bar{s}]}(s)ds}, &\mbox{for $\theta \in (0,1) \backslash \{\frac{1}{2}\}$},\\ 
\vphantom{X^{X^{X^{X^{X^{X^{X}}}}}}}
{\displaystyle\frac{1}{\log(\bar{s})}s^{-1} \I_{[1,\bar{s}]}(s)ds}, &\mbox{for $\theta=\frac{1}{2}$},\\
\vphantom{X^{X^{X^{X^{X^{X^{X}}}}}}}
{\displaystyle\frac{2\theta-1}{1-\bar{s}^{2\theta-1}}   s^{2\theta-2} \I_{[\bar{s},1]}(s)ds}, &\mbox{for $\theta \in (1,\infty)$}.\end{cases}$$
Moreover, the optimal growth rate for the frictionless market with price process $\tilde{S}$ and for the market with bid-ask process $((1-\lambda)S,S)$ is given by
\begin{align}
\delta &= \int_1^{\bar{s}} \frac{\tilde{\mu}^2(s)}{2\tilde{\sigma}^2(s)} \nu(ds) \notag \\
&=
\begin{cases}
  {\displaystyle\frac{(2\theta-1)\sigma^2\bar{s}}{2(1+c)(\bar{s} + (-2 - c + 2 \theta (1 + c))\bar{s}^{2\theta})}}
    & \mbox{for $\theta \in (0,\infty) \backslash \{\frac{1}{2},1\}$}, \\
  {\displaystyle\frac{\vphantom{X^{X^X}}\sigma^2}{2(1+c)(1+c - \log \bar{s})}} & \mbox{for $\theta=\frac{1}{2}$},
\end{cases} \label{eq:OG closed form}
\end{align}
where $c$ and $\bar{s}$ denote the constants from Lemmas~\ref{l:existence} resp.~\ref{lem:s}.
\end{proposition}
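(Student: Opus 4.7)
My plan has four steps: (i) identify $X_t := S_t/m_t$ as a doubly reflected geometric Brownian motion on $[1,\bar{s}]$ (respectively $[\bar{s},1]$ for $\theta>1$); (ii) read off its stationary density from one-dimensional diffusion theory; (iii) apply the ergodic theorem to identify $\delta$ as an integral against $\nu$; (iv) evaluate that integral in closed form using the ODE for $g$.

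Step~(i) is immediate from the construction of $m$ in Section~\ref{sec:2}: on the open set $\{1<S/m<\bar{s}\}$ the process $m$ is locally constant, so $dX_t = X_t(\mu\,dt+\sigma\,dW_t)$, whereas at the two endpoints $m$ is pushed just enough to keep $X$ inside the interval, which is the standard Skorokhod reflection. For (ii) I apply the classical scale/speed formalism: the scale derivative is $s'(x)=x^{-2\theta}$ and the speed density is $2\sigma^{-2}x^{2\theta-2}$, which after normalisation over the relevant interval gives exactly the expressions stated in the proposition (the case $\theta=1/2$ producing the $1/\log\bar{s}$ normalisation upon integration). For (iii), the ergodic theorem for positively recurrent one-dimensional diffusions (\cite{borodin.salminen.02}, Sections~II.36--37) delivers $T^{-1}\int_0^T f(X_t)\,dt \to \int f\,d\nu$ a.s.; boundedness of $\tilde\mu^2/\tilde\sigma^2$ on the compact state space then converts this, via dominated convergence, into $L^1$-convergence, so the $\limsup$ in~\eqref{A1} is a genuine limit equal to $\int\tilde\mu^2/(2\tilde\sigma^2)\,d\nu$.

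Step~(iv) is where the structure of $g$ enters. Substituting the expressions for $\tilde\mu,\tilde\sigma$ reduces the task to computing $I:=\int_1^{\bar{s}} g'(s)^2 s^{2\theta}/(c+g(s))^2\,ds$. Rewriting~\eqref{eq:odeg} as $g''/g' + 2\theta/s = 2g'/(c+g)$ and differentiating the function $F(s):=g'(s)s^{2\theta}/(c+g(s))$, one finds after a one-line computation that $F'(s) = g'(s)^2 s^{2\theta}/(c+g(s))^2$, so that
\begin{equation*}
I = F(\bar{s}) - F(1) = \frac{(1-\lambda)\bar{s}^{2\theta}}{c+(1-\lambda)\bar{s}} - \frac{1}{c+1},
\end{equation*}
using the boundary values $g(1)=g'(1)=1$ and $g(\bar{s})=(1-\lambda)\bar{s}$, $g'(\bar{s})=1-\lambda$ from Lemma~\ref{lem:g}. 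Multiplying by the prefactor coming from $\nu$ and simplifying via the rational form of $g$ in Lemma~\ref{lem:g} --- specifically the identity $(c+g(\bar{s}))\bigl(\bar{s}-(2-2\theta-c(2\theta-1))\bar{s}^{2\theta}\bigr) = (2\theta-1)(1+c)^2\bar{s}^{2\theta}$, which drops out of the numerator/denominator of $g$ --- collapses the answer to the form stated in~\eqref{eq:OG closed form}. The case $\theta=1/2$ is handled either by taking $\theta\to 1/2$ in the general formula or by repeating the argument with the logarithmic solution of the ODE, in which case the same exact-derivative identity yields $I = F(\bar{s})-F(1)$ with $s^{2\theta}$ replaced by $s$.

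The main obstacle is the bookkeeping in step~(iv): one has to invoke the defining equations for $c$ and $\bar{s}$ from Lemmas~\ref{l:existence}/\ref{lem:s} repeatedly to eliminate the explicit $\lambda$ and compress the expression into the compact rational form quoted in~\eqref{eq:OG closed form}. The exact-derivative identity for $F$ is what turns this from a genuine computation into a purely algebraic cancellation.
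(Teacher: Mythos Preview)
Your proof is correct and follows the same approach as the paper, which merely states that an ergodic theorem for positively recurrent one-dimensional diffusions together with ``elementary integration'' yields the result. Your steps (i)--(iii) unpack exactly that sketch, and your observation in step~(iv) that $F(s)=g'(s)s^{2\theta}/(c+g(s))$ is an antiderivative of the integrand (via the ODE~\eqref{eq:odeg}) is precisely the ``elementary integration'' the paper leaves implicit.
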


\begin{remark}
In the degenerate case $\theta=1$, the optimal portfolio $(\varphi^0,\varphi)=(0,x)$ leads to $\tilde{V}(\varphi^0,\varphi)=x S$. Hence $\mathbb{E}[\log(\tilde{V}_T(\varphi^0,\varphi))]=\log(x)+\frac{\sigma^2}{2}T$ and the optimal growth rate is given by $\delta=\sigma^2/2$ as in the frictionless case.
\end{remark}

\section{Asymptotic expansions}\label{sec:7}

Similarly to Jane{\v{c}}ek and Shreve~\cite{janecek.shreve.04} for the infinite-horizon optimal consumption problem, we now determine asymptotic expansions of the boundaries of the no-trade region and the long-run optimal growth rate.

\subsection{The no-trade region}

We begin with the following preparatory result.

\begin{proposition}\label{prop:s c expans}
  For fixed $0<\theta\neq1$ and sufficiently small $\lambda>0$, the functions $c(\lambda)$   
  and~$\bar{s}(\lambda)$ have fractional Taylor expansions of the form
  \begin{align}
    \bar{s} &= 1 + \sum_{k=1}^\infty p_k(\theta)
      \left( \frac{6}{\theta(1-\theta)} \right)^{k/3} \lambda^{k/3}, \label{eq:s expans} \\
    c &= \bar{c} + \sum_{k=1}^\infty q_k(\theta)
      \left( \frac{6}{\theta(1-\theta)} \right)^{k/3} \lambda^{k/3}, \label{eq:c expans}
  \end{align}
  where $\bar{c}=\frac{1-\theta}{\theta}$, and~$p_k$ and~$q_k$ are rational functions
  that can be algorithmically computed.
  (For $\theta>1$, the quantity $1/(1-\theta)^{k/3}$ has to be read as $(-1)^k/(\theta-1)^{k/3}$.)
  The first terms of these expansions are
  \begin{align}
    \bar{s} &= 1 + \left( \frac{6}{\theta(1-\theta)} \right)^{1/3} \lambda^{1/3} +\frac12 \left( \frac{6}{\theta(1-\theta)} \right)^{2/3} \lambda^{2/3} \label{eq:s first terms} \\
    &\qquad + \frac{1}{60}(4-\theta)(\theta+3) \frac{6}{\theta(1-\theta)}
    \lambda + O(\lambda^{4/3}), \notag \\
    c &= \bar{c} + \frac{1-\theta}{2\theta}\left( \frac{6}{\theta(1-\theta)} \right)^{1/3} \lambda^{1/3} +\frac{(1-\theta)^2}{4\theta} \left( \frac{6}{\theta(1-\theta)} \right)^{2/3} \lambda^{2/3} \label{eq:c first terms} \\
    & \qquad -\frac{1}{40\theta}(\theta-2)(\theta-1)(3\theta-2) \frac{6}{\theta(1-\theta)}
    \lambda + O(\lambda^{4/3}). \notag
  \end{align}
\end{proposition}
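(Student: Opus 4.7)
My approach would be a Newton--Puiseux analysis applied to the implicit equation $f(c,\lambda)=0$ from Lemma~\ref{l:existence}, together with the fact from Lemma~\ref{lem:s} that $\bar s$ is an analytic function of $c$ away from $\bar c$. The plan is to identify the order of vanishing of $f(\,\cdot\,,0)$ at $\bar c$, find the Newton polygon, and then obtain the series via an application of the analytic implicit function theorem after a suitable change of variable.

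\emph{Setup and reference point.} Writing $\bar c = (1-\theta)/\theta$ and $F(c)=c/[(2\theta-1+2c\theta)(2-2\theta-c(2\theta-1))]$, one checks directly that $F(\bar c)=1$, so that $\bar s(\bar c)=1$ and $f(\bar c,0)=0$; this identifies $(c,\bar s,\lambda)=(\bar c,1,0)$ as the unperturbed solution. The map $c\mapsto\bar s(c)=F(c)^{1/(2\theta-1)}$ is analytic in a neighbourhood of $\bar c$, so once we have the expansion of $c(\lambda)$ in powers of $\lambda^{1/3}$, the expansion of $\bar s(\lambda)$ is obtained by substitution. Thus the real work is on~$c$.

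\emph{The cubic vanishing.} Set $\varepsilon := c-\bar c$ and expand $f(\bar c+\varepsilon,\lambda)$ as a power series in $(\varepsilon,\lambda)$. The key algebraic observation, and the heart of the proof, is that the coefficients of $\varepsilon^0, \varepsilon^1$ and $\varepsilon^2$ all vanish at $\lambda=0$, while the coefficient of $\varepsilon^3$ equals $K(\theta)=\frac{4\theta^4}{3(1-\theta)^2}\ne 0$. Concretely, by taking logarithms one writes $\log F(\bar c+\varepsilon)=\log(1+a\varepsilon)-\log(1+b\varepsilon)-\log(1-d\varepsilon)$ with explicit $a,b,d$ depending on $\theta$, multiplies by the exponent $(1-\theta)/(\theta-1/2)$, exponentiates, and matches against the binomial expansion of $(1+2\theta\varepsilon)^2$; the two first-order coefficients and the two second-order coefficients cancel identically, and what remains is $K(\theta)\varepsilon^3+O(\varepsilon^4)$. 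The $\lambda$-dependence is simple: $f(\bar c+\varepsilon,\lambda)=f(\bar c+\varepsilon,0)-\tfrac{\lambda}{1-\lambda}(1+2\theta\varepsilon)^2$. Hence the equation $f=0$ reads
\begin{equation*}
K(\theta)\varepsilon^3+O(\varepsilon^4)=\frac{\lambda}{1-\lambda}\bigl(1+O(\varepsilon)\bigr)^2,
\end{equation*}
whose Newton polygon at $(\varepsilon,\lambda)=(0,0)$ consists of the single segment from $(3,0)$ to $(0,1)$ of slope $-1/3$.

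\emph{Puiseux expansion via analytic implicit function theorem.} To turn this into a convergent series I substitute $\varepsilon = u\,z$ with $z:=\lambda^{1/3}$, divide the equation $f(\bar c+uz,z^3)=0$ by $z^3=\lambda$, and observe that the result is of the form $K(\theta)u^3-1+z\,R(u,z)=0$, where $R$ is analytic in a neighbourhood of $(u,z)=(K(\theta)^{-1/3},0)$. Since $\partial_u(K(\theta)u^3-1)=3K(\theta)^{1/3}\ne 0$ at $u=K(\theta)^{-1/3}$, the analytic implicit function theorem yields a unique analytic $u=u(z)$ with $u(0)=K(\theta)^{-1/3}$, and $\varepsilon=zu(z)$ produces the claimed convergent series in $z=\lambda^{1/3}$. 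Uniqueness matches the unique root from Lemma~\ref{l:existence}. The leading coefficient is $K(\theta)^{-1/3}=\bigl(\tfrac{3(1-\theta)^2}{4\theta^4}\bigr)^{1/3}=\tfrac{1-\theta}{2\theta}\bigl(\tfrac{6}{\theta(1-\theta)}\bigr)^{1/3}$, confirming $q_1(\theta)=(1-\theta)/(2\theta)$; higher coefficients $q_k,p_k$ are recovered algorithmically by plugging the ansatz back into the equation and matching order by order, and $p_k$ are obtained by post-composing with the analytic $\bar s(c)$. For $\theta>1$ the same derivation applies but $K(\theta)^{-1/3}$ becomes a real negative number (the real cube root of a negative quantity), giving the sign convention stated in the proposition.

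\emph{Main obstacle and the degenerate case.} The principal technical hurdle is the cubic-vanishing identity in Step~2: algebraically it is a cancellation that is not manifest until one expands to third order, and it must be done exactly (not numerically) in order to pin down $K(\theta)$. The case $\theta=\tfrac12$ requires separate attention because the exponent $(1-\theta)/(\theta-1/2)$ is singular; however, the alternative closed form $f(c)=\exp((c^2-1)/c)-(1-\lambda)^{-1}c^2$ (with $\bar c=1$) admits the identical strategy: expanding around $c=1$ gives $f(1+\varepsilon,0)=\tfrac{4}{3}\varepsilon^3+O(\varepsilon^4)$, matching the $\theta=\tfrac12$ specialization of $K(\theta)$, and the rest of the argument goes through unchanged. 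Routine but tedious coefficient matching then yields the explicit first three terms listed in \eqref{eq:s first terms} and \eqref{eq:c first terms}.
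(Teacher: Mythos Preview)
Your proposal is correct and follows essentially the same approach as the paper: both identify the cubic vanishing $\bar s-g(\bar s)$ (equivalently $f(\bar c+\varepsilon,0)$) $=\tfrac{4\theta^4}{3(1-\theta)^2}\varepsilon^3+O(\varepsilon^4)$, then invert to obtain a convergent series in $\lambda^{1/3}$, and finally substitute into the analytic map $c\mapsto\bar s(c)$. The only cosmetic differences are that the paper works with the equivalent equation $\lambda\bar s=\bar s-g(\bar s)$ rather than $f(c,\lambda)=0$ directly, and uses the Lagrange inversion formula in place of your implicit function theorem argument; both routes yield the same coefficients and the same justification of convergence.
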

\begin{proof} 
  By~\eqref{eq:s},
  the quantity~$\bar{s}=\bar{s}(\lambda)$ can be written as $F(c(\lambda))$, where~$F(z)$ is
  analytic at $z=\bar{c}$. (We will again suppress the dependence of~$\bar{s}$ and~$c$ on~$\lambda$
  in the notation.) We focus on~$\theta\neq\tfrac12$, since the case $\theta=\tfrac12$ is an easy modification.
  
  Expanding the rational function $c/(\dots)$ in~\eqref{eq:s} around~$c=\bar{c}$, and appealing to the binomial
  theorem (for real exponent), we find that the Taylor coefficients of
  \begin{align}
    \bar{s} &= \left( 1 + \frac{2\theta(2\theta-1)}{\theta-1} (c-\bar{c}) +\dots \right)^{1/(2\theta-1)} \notag \\
    &= 1 + \frac{2\theta}{1-\theta}(c-\bar{c}) + \dots \label{eq:s func of c}
  \end{align}
  are rational functions of~$\theta$. An efficient algorithm for the latter step, i.e., for calculating the coefficients
  of a power series raised to some power, can be found in Gould~\cite{Go74}.
  Now we insert the series~\eqref{eq:s func of c} into the equation $g(\bar{s})=(1-\lambda)\bar{s}$, i.e., into
  \begin{equation}\label{eq:modif}
    \lambda \bar{s} = \bar{s} - g(\bar{s}).
  \end{equation}
  Performing the calculations (binomial series again), we find that the expansion of the right-hand side
  of~\eqref{eq:modif} around~$\bar{c}$ starts with the third power of $c-\bar{c}$:
  \[
    \bar{s} - g(\bar{s}) = \tfrac{4\theta^4}{3(1-\theta)^2} (c-\bar{c})^3 (1 + O(c-\bar{c})).
  \]
  Dividing~\eqref{eq:modif} by the series~$\tfrac{4\theta^4}{3(1-\theta)^2}(1 + O(c-\bar{c}))$
  (whose coefficients are again computable)
  therefore yields an equation of the form
  \begin{equation}\label{eq:c^3}
    \lambda(a_0 + a_1 (c-\bar{c}) + \dots) = (c-\bar{c})^3.
  \end{equation}
  The series on the left-hand side is an analytic function $a_0 + a_1 z + \dots$ evaluated at $z=c-\bar{c}$, with real Taylor   
  coefficients $a_0,a_1,\dots$ Its coefficients~$a_k$ are computable rational functions of~$\theta$. Moreover,
  the first coefficient $a_0=\tfrac34 (1-\theta)^2/\theta^4$ is non-zero. Hence we can raise~\eqref{eq:c^3} to
  the power~$\tfrac13$ to obtain
  \[
    \lambda^{1/3}a_0^{1/3}(1 + \tfrac{a_1}{3a_0} (c-\bar{c}) + \dots ) = c-\bar{c},
  \]
  where the power series represents again an analytic function. By the Lagrange inversion theorem (see~\cite{deBr81}
  or~\cite[\S~4.7]{Kn98}),
  $c$ is an analytic function of~$\lambda^{1/3}$:
  \begin{equation}\label{eq:expans c lambda}
    c - \bar{c} = a_0^{1/3} \lambda^{1/3} + \tfrac13 a_0^{-1/3}a_1 \lambda^{2/3} + \dots
  \end{equation}
  This is the expansion~\eqref{eq:c expans}. To see that the coefficients are of the
  announced form, note that
  Lagrange's inversion formula implies that the coefficients in~\eqref{eq:expans c lambda} are given by
  \[
    [\lambda^{k/3}](c-\bar{c}) = \tfrac{1}{k}[z^{k-1}]a_0^{k/3}(1 + \tfrac{a_1}{3a_0} z + \dots)^k ,\qquad k\geq1,
  \]
  where the operator~$[z^k]$ extracts the $k$-th coefficient of a power series.
  Since the~$a_k$ are rational functions of~$\theta$, and
  \[
    a_0^{k/3} = \tfrac{(1-\theta)^k}{8^{k/3}\theta^k} \left( \tfrac{6}{\theta(1-\theta)} \right)^{k/3},
  \]
  the expansion of~$c$ is indeed of the form stated in the proposition.

  As for~$\bar{s}$,
  inserting~\eqref{eq:expans c lambda} into~\eqref{eq:s func of c} yields~\eqref{eq:s expans}:
  \[
    \bar{s} = 1 + \tfrac{2\theta}{1-\theta} \left(\tfrac34 \tfrac{(1-\theta)^2}{\theta^4} \right)^{1/3} \lambda^{1/3} + \dots
  \]
  See Knuth~\cite[\S~4.7]{Kn98} for an efficient algorithm to perform this composition of power series.
\end{proof}

Once the existence of expansions of~$\bar{s}$ and~$c$ in powers of~$\lambda^{1/3}$ is established, 
one can also compute the coefficients by inserting an ansatz
\begin{equation}\label{eq:ansatz}
  \bar{s} = 1 + \sum_{k=1}^\infty A_k \lambda^{k/3}, \qquad
  c = \bar{c} + \sum_{k=1}^\infty B_k \lambda^{k/3}
\end{equation}
into the equations
\begin{align}
  g(\bar{s}) &= (1-\lambda)\bar{s}, \label{eq:g eq 1} \\
  g'(\bar{s}) &= 1-\lambda, \label{eq:g eq 2}
\end{align}
and then comparing coefficients (preferably with a computer algebra system).
Implementing this seems somewhat easier (but less efficient) than implementing the preceding proof.
To give some details, let us look at the
expression in the first line of~\eqref{eq:def g}. Note that, by the binomial theorem (for real exponent),
the coefficients~$\tilde{A}_k$ of
\[
  \bar{s}^{2\theta} = \left(1 + \sum_{k=1}^\infty A_k \lambda^{k/3}\right)^{2\theta} = 1 + \sum_{k=1}^\infty
  \tilde{A}_k \lambda^{k/3}
\]
can be expressed explicitly in terms of the unknown coefficients~$A_k$. Performing the convolution with
the ansatz for~$c$, and continuing in a straightforward way (multiplying by constant factors, and appealing once more
to the binomial series, this time with exponent~$-1$), we find that
\begin{equation}\label{eq:coeff comp1}
  g(\bar{s}) = 1 + A_1 \lambda^{1/3} + A_2 \lambda^{2/3} +
  (\tfrac13 \theta A_1^3+A_3-\tfrac13 \theta^2 A_1^2 (A_1+3B_1))\lambda + \dots
\end{equation}
Now insert the ansatz~\eqref{eq:ansatz} into the right-hand side $(1-\lambda)\bar{s}$ of the first equation~\eqref{eq:g eq 1}:
\begin{equation}\label{eq:coeff comp2}
  (1-\lambda)\bar{s} = 1 + A_1 \lambda^{1/3} + A_2 \lambda^{2/3}  + (A_3-1)\lambda + \dots
\end{equation}
Comparing coefficients in~\eqref{eq:coeff comp1} and~\eqref{eq:coeff comp2}
yields an infinite set of polynomial equations for the~$A_k$ and~$B_k$.
Proceeding analogously for the equation~\eqref{eq:g eq 2} yields a second set of equations.
It turns out that the whole collection can be solved recursively for the coefficients~$A_k$
and~$B_k$.
Along these lines the coefficients in~\eqref{eq:s first terms} and~\eqref{eq:c first terms} were calculated.

With the expansions of~$\bar{s}$ and~$c$ at hand,
we can now determine the asymptotic size of the no-trade region.

\begin{corollary}\label{cor:asymp}\label{cor:notradeasymp}
  For fixed $0<\theta\neq1$, the lower and upper boundaries of the no-trade region in terms of the ask price $S$ have the expansions
  \[
    \frac{1}{1+c} = \theta - \left(\frac34 \theta^2(1-\theta)^2\right)^{1/3} \lambda^{1/3} + \frac{3}{20}(2\theta^2-2\theta+1)\lambda + O(\lambda^{4/3})
  \]
  and
  \[
    \frac{1}{1+c/\bar{s}} = \theta + \left(\frac34 \theta^2(1-\theta)^2\right)^{1/3} \lambda^{1/3}  -\frac{1}{20}(26 \theta^2 - 26 \theta + 3) \lambda + O(\lambda^{4/3}),
  \]
  respectively. The size of the no-trade region in terms of $S$ satisfies
  \begin{equation*}
    \frac{1}{1+c/\bar{s}} - \frac{1}{1+c} = (6\theta^2(1-\theta)^2)^{1/3}\lambda^{1/3}
      -\frac{1}{10}(4\theta-3)(4\theta-1) \lambda + O(\lambda^{4/3}).
  \end{equation*}
\end{corollary}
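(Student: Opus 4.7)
The plan is to obtain all three expansions as mechanical consequences of Proposition~\ref{prop:s c expans}. The key point is that $1+\bar c = 1/\theta \neq 0$, so the functions $c \mapsto 1/(1+c)$ and $(c,\bar s) \mapsto 1/(1+c/\bar s)$ are analytic near $(c,\bar s) = (\bar c, 1)$; composing with the fractional power series for $c(\lambda)$ and $\bar s(\lambda)$ yields fractional Taylor expansions in $\lambda^{1/3}$ of the same structure. Writing $c - \bar c = \sum_{k \geq 1} B_k \lambda^{k/3}$ and $\bar s - 1 = \sum_{k \geq 1} A_k \lambda^{k/3}$ with the explicit leading coefficients from \eqref{eq:s first terms}--\eqref{eq:c first terms}, it suffices to carry out power-series manipulations through the $\lambda$-term.

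For the lower boundary, I would use the geometric series
\[
  \frac{1}{1+c} \;=\; \frac{\theta}{1 + \theta(c-\bar c)} \;=\; \theta \sum_{k=0}^\infty (-\theta)^k (c-\bar c)^k,
\]
substitute the expansion of $c - \bar c$, and collect coefficients up to order $\lambda$. For the upper boundary, the same trick works after first expanding $c/\bar s$: write $1/\bar s = \sum_{k \geq 0} (-1)^k (\bar s - 1)^k$ (again a geometric series, since $\bar s - 1 = O(\lambda^{1/3})$), multiply by the series for $c$ to obtain $c/\bar s = \bar c + \sum_{k \geq 1} \widetilde{B}_k \lambda^{k/3}$ with computable $\widetilde{B}_k$, and then apply
\[
  \frac{1}{1+c/\bar s} \;=\; \theta \sum_{k=0}^\infty (-\theta)^k (c/\bar s - \bar c)^k.
\]
Finally, the expansion of the difference is obtained by subtracting the two series term by term.

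The only obstacle is bookkeeping: one must compute the $\lambda^{2/3}$ coefficient of each boundary and observe that it vanishes. For $1/(1+c)$ the $\lambda^{2/3}$ contribution equals $\theta^3 B_1^2 - \theta^2 B_2$, which is zero by direct substitution of the values of $B_1, B_2$ from \eqref{eq:c first terms}; the analogous cancellation occurs for the upper boundary. Hence the next correction beyond the $\lambda^{1/3}$ term is of order $\lambda$, as claimed. The coefficient of $\lambda$ for each boundary and for their difference is then a rational function of $\theta$ that one reads off from the resulting finite sum; carrying out the symbolic algebra (easily verified with a computer algebra system) yields $\tfrac{3}{20}(2\theta^2 - 2\theta + 1)$, $-\tfrac{1}{20}(26\theta^2 - 26\theta + 3)$, and $-\tfrac{1}{10}(4\theta - 3)(4\theta - 1)$, respectively. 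Since $B_1 > 0$ and the corresponding coefficient in $1/(1+c)$ is $-\theta^2 B_1 = -(\tfrac{3}{4}\theta^2(1-\theta)^2)^{1/3}$, the signs and form of the leading $\lambda^{1/3}$ terms follow as stated.
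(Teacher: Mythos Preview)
Your proposal is correct and follows essentially the same route as the paper: the paper's proof simply says to insert the expansions of $\bar{s}$ and $c$ from Proposition~\ref{prop:s c expans} into $1/(1+c)$ and $1/(1+c/\bar{s})$ and compute via the binomial series, noting that the calculation is amenable to computer algebra. Your write-up spells out precisely this calculation (your geometric series is the binomial series with exponent $-1$), including the useful observation that the $\lambda^{2/3}$ coefficients cancel, so it is a more detailed version of the same argument rather than a different one.
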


Note that there is no $\lambda^{2/3}$-term in these expansions. We also stress again that
more terms can be obtained, if desired, by using the machinery of symbolic computation.

\begin{proof}
  Insert the expansions
  of~$\bar{s}$ and~$c$ found in Proposition~\eqref{prop:s c expans} into $1/(1+c)$ and
  $1/(1+c/\bar{s})$.
  A straightforward calculation, using the binomial series and amenable to computer algebra,
  then yields the desired expansions.
\end{proof}

\subsection{The optimal growth rate}

\begin{proposition}\label{prop:growthasymp}
  Suppose that $0<\theta\neq1$. As $\lambda\to0$, the optimal growth rate has the asymptotics
  \begin{equation}\label{eq:OG asympt}
    \delta = \frac{\mu^2}{2\sigma^2} - \left(\frac{3\sigma^3}{\sqrt{128}}\theta^2(1-\theta)^2\right)^{2/3} \lambda^{2/3} +  O(\lambda^{4/3}).
  \end{equation}
\end{proposition}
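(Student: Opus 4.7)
The plan is essentially computational: substitute the fractional Taylor expansions of $\bar{s}$ and $c$ from Proposition~\ref{prop:s c expans} into the closed-form expression \eqref{eq:OG closed form} for $\delta$, and read off the coefficients of the resulting power series in $\lambda^{1/3}$.

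I would first restrict to the generic case $\theta\in(0,\infty)\setminus\{\tfrac12,1\}$ and work with
\[
  \delta = \frac{(2\theta-1)\sigma^2\bar{s}}{2(1+c)\bigl(\bar{s} + (-2-c+2\theta(1+c))\bar{s}^{2\theta}\bigr)}.
\]
Setting $u:=\lambda^{1/3}$, Proposition~\ref{prop:s c expans} provides $\bar{s}(u)$ and $c(u)$ as convergent power series in $u$ with $\bar{s}(0)=1$ and $c(0)=\bar{c}=(1-\theta)/\theta$. By the binomial theorem (with real exponent), $\bar{s}(u)^{2\theta}$ admits an analogous expansion. Since $1+\bar{c}=1/\theta\neq 0$ and one checks that the bracketed denominator does not vanish at $u=0$, standard power-series arithmetic (addition, multiplication, composition, division) then yields an expansion of the form
\[
  \delta = d_0 + d_1 u + d_2 u^2 + d_3 u^3 + O(u^4).
\]

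Next, I would compute $d_0,\dots,d_3$ explicitly using only the terms of $\bar{s}$ and $c$ in \eqref{eq:s first terms} and \eqref{eq:c first terms} through order $u^3$. Evaluating at $u=0$ and using $\bar{c}=(1-\theta)/\theta$, one finds
\[
  d_0 = \frac{\sigma^2\theta^2}{2} = \frac{\mu^2}{2\sigma^2},
\]
which is the Merton frictionless growth rate, as expected. The delicate point is the identities $d_1 = d_3 = 0$ and $d_2 = -\bigl(\tfrac{3\sigma^3}{\sqrt{128}}\theta^2(1-\theta)^2\bigr)^{2/3}$, which together yield \eqref{eq:OG asympt}. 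These are verified by a direct but tedious calculation, best delegated to a computer algebra system. The case $\theta=\tfrac12$ is handled in parallel using the second branch of \eqref{eq:OG closed form} and the corresponding closed-form expressions from Lemmas~\ref{lem:s} and~\ref{lem:g}.

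The main obstacle is not conceptual but algebraic: the vanishing of the odd-order coefficients $d_1$ and $d_3$ must be checked explicitly. The vanishing of $d_1$ is natural, since the instantaneous frictionless growth rate is quadratic near the Merton proportion $\theta$ and the no-trade region contains $\theta$, so the leading correction is forced to be of order $\lambda^{2/3}$. The vanishing of $d_3$ reflects the fact, already visible in Corollary~\ref{cor:notradeasymp}, that the no-trade region in terms of the ask price is symmetric around $\theta$ up to order $\lambda^{2/3}$, which suppresses the $\lambda$-order correction to the growth rate and leaves the $O(\lambda^{4/3})$ remainder asserted in~\eqref{eq:OG asympt}.
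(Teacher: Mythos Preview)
Your proposal is correct and follows essentially the same route as the paper: substitute the expansions of $\bar{s}$ and $c$ from Proposition~\ref{prop:s c expans} into the closed-form expression~\eqref{eq:OG closed form} and expand. The paper's own proof is a one-line reference to this procedure, so your write-up is simply a more detailed version of the same computation, with the added (but inessential) heuristic explanation for the vanishing of the odd-order coefficients.
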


Note that the $\lambda^{1/3}$- as well as the $\lambda$-term vanish in the above expansion. Moreover, the preceding result can again be extended to a full expansion of~$\delta$
in powers of~$\lambda^{1/3}$.

\begin{proof}
  This easily follows from the explicit formula~\eqref{eq:OG closed form}, by proceeding as in the proof of Corollary~\ref{cor:asymp}.
\end{proof}

\subsection{Comparison to Jane{\v{c}}ek and Shreve (2004)}
In order to compare our expansions to the asymptotic results of Jane{\v{c}}ek and Shreve~\cite{janecek.shreve.04}, we rewrite them in terms of a bid-ask spread $((1-\check{\lambda})\check{S},(1+\check{\lambda})\check{S}))$. As explained in the first footnote in the introduction, we set
$$\check{S}=\tfrac{2-\lambda}{2}S, \quad \check{\lambda}=\tfrac{\lambda}{2-\lambda}.$$
Therefore $\check{S}$ also follows a Black-Scholes model with drift rate $\mu$ and volatility $\sigma$ and the fraction of wealth invested into stocks in terms of $\check{S}$ is given by
$$\check{\pi}=\frac{\varphi \check{S}}{\varphi^0+\varphi \check{S}}=\frac{1}{1+\frac{m}{S}\frac{2c}{2-\check{\lambda}}},$$
where we have again used $\varphi^0=cm\varphi$ for the last equality. This yields the expansions
  \[
    \frac{1}{1+2c/(2-\check{\lambda})} = \theta - \left(\frac32 \theta^2(1-\theta)^2\right)^{1/3} \check{\lambda}^{1/3} + \frac{1}{10}(3 - 11 \theta + 11 \theta^2)\check{\lambda}
     + O(\check{\lambda}^{4/3})
  \]
  for the lower boundary and 
    \[
    \frac{1}{1+2c/((2-\check{\lambda})\bar{s})} = \theta + \left(\frac32 \theta^2(1-\theta)^2\right)^{1/3} \check{\lambda}^{1/3}  -\frac{3}{10}(1 - 7 \theta + 7 \theta^2) \check{\lambda} + O(\check{\lambda}^{4/3}),
  \]
for the upper boundary of the no-trade-region in terms of $\check{S}$, respectively. The size of the no-trade region satisfies
  \begin{align*}
    &\frac{1}{1+2c/((2-\check{\lambda})\bar{s})} - \frac{1}{1+2c/(2-\check{\lambda})}\\
     &\qquad = (12\theta^2(1-\theta)^2)^{1/3} \check{\lambda}^{1/3}
      -\tfrac{1}{5}(4\theta-1)(4\theta-3) \check{\lambda}      
       + O(\check{\lambda}^{4/3}).
  \end{align*}
Comparing these expansions to the results of~\cite{janecek.shreve.04} for the infinite-horizon consumption problem (specialized to logarithmic utility), we find that the leading $\check{\lambda}^{1/3}$-terms coincide. Hence, the relative difference between the size of the no-trade region with and without intermediate consumption goes to zero as $\check{\lambda} \to 0$.

The higher order terms for the consumption problem are unknown. However,~\cite{janecek.shreve.04} argue heuristically that -- in second order approximation --  the selling intervention point is closer to the Merton proportion~$\theta$ than the buying intervention point. They also point out that, on an intuitive level, this is due to the fact that consumption reduces the position in the bank account. In line with this, the present setup without consumption leads to symmetric second order $\check{\lambda}^{2/3}$-terms, which in fact vanish. However, note that the no-trade regions in terms of the mid-price $\check{S}$ resp.\ the ask price $S$ are only symmetric up to order $O(\lambda^{2/3})$, unlike the perfectly symmetric no-trade region in terms of the shadow price~$\tilde{S}$ (cf.\ Equation \eqref{eq:symmetric} above).

An extension of the present approach to the infinite-horizon consumption problem as well as to power utility is left to future research.

\section*{Appendix.  Proofs of Lemmas~\ref{l:existence} and~\ref{lem:s}}\label{appendix}

\begin{proof}[of Lemma~\ref{l:existence}] First consider the case $\theta \in (\frac{1}{2},1)$. By insertion, we find $f(\frac{1-\theta}{\theta})=-\frac{\lambda}{1-\lambda}<0$. Moreover, $f(c) \uparrow \infty$ for $c \uparrow \frac{1-\theta}{\theta-1/2}$, such that a solution to $f(c)=0$ exists on $(\frac{1-\theta}{\theta},\frac{1-\theta}{\theta-1/2})$ by the intermediate value theorem. We now show that it is unique.

 Differentiation yields $f'(\frac{1-\theta}{\theta})=-\frac{4\theta\lambda}{1-\lambda}<0$ and $f''(\frac{1-\theta}{\theta})=-\frac{8\theta^2 \lambda}{1-\lambda}<0$. On the other hand, we find $f''(c)>0$ on $(\frac{1-\theta}{\theta-1/2}-\ve,\frac{1-\theta}{\theta-1/2})$ for some $\ve>0$.

Now notice that~$f''(c)$ is increasing on  $(\frac{1-\theta}{\theta},\frac{1-\theta}{\theta-1/2})$. Indeed, we have 
\begin{equation}\label{eq:f'''}
  f'''(c)=\frac{(2-2\theta)\bar{s}(c)^{2-2\theta}}{c^3(2\theta-1+2c\theta)^3(2-2\theta-c(2\theta-1))^3}k(c),
\end{equation}
for $\overline{s}$ as in Equation~\eqref{eq:s} and
\begin{align*}
k(c)=&16 c^6 \theta^4-48c^4 \theta^3(1-5\theta+4\theta^2)-24c^3\theta(1-13\theta+52\theta^2-72\theta^3+32\theta^4)\\
&-6c^2(-2+50\theta-296\theta^2+640\theta^3-584\theta^4+192\theta^5)\\
&-24c(1-\theta)^2(-3+28\theta-56\theta^2+32\theta^3)+16(1-\theta)^3(6-17\theta+12\theta^2).
\end{align*}
By tedious calculations or using Cylindrical Algebraic Decomposition~\cite{Co75} (henceforth CAD), it follows that $k(c)>0$ and in turn $f'''(c)>0$ on  $(\frac{1-\theta}{\theta},\frac{1-\theta}{\theta-1/2})$. 

Consequently, $f''(c) \leq 0$ on $[\frac{1-\theta}{\theta},c_0]$ and $f''(c)>0$ on $(c_0,\frac{1-\theta}{\theta-1/2})$ for some $c_0 \in (\frac{1-\theta}{\theta},\frac{1-\theta}{\theta-1/2})$. Combining this with $f'(\frac{1-\theta}{\theta})<0$ and  $f(c) \uparrow \infty$ for $c \uparrow \frac{1-\theta}{\theta-1/2}$, we find that there exists $c_1 \in (c_0, \frac{1-\theta}{\theta-1/2})$ such that $f'(c) \leq 0$ on $[\frac{1-\theta}{\theta},c_1]$ and $f'(c)>0$ on $(c_1,\frac{1-\theta}{\theta-1/2})$.
Since $f(\frac{1-\theta}{\theta})<0$, this implies that any solution to $f(c)=0$ must lie on $(c_1,\frac{1-\theta}{\theta-1/2})$ and hence is unique, because $f$ is strictly increasing there.

Now let $\theta \in (0,\frac{1}{2})$. Then $f$ is continuous on  $ (\frac{1-\theta}{\theta},\infty)$ and $f(\frac{1-\theta}{\theta})=-\frac{\lambda}{1-\lambda}<0$ as above. Moreover, the first term of $f(c)$ grows like $c^{(1-\theta)/(1/2-\theta)}$ for $c \to \infty$, whereas the second one grows like $c^2$. Since $\frac{1-\theta}{1/2-\theta}>2$ for $\theta>0$, this implies that $f(c) \uparrow \infty$ for $c \uparrow \infty$. Hence a solution $c$ to $f(c)=0$ exists on $ (\frac{1-\theta}{\theta},\infty)$ by the intermediate value theorem, and it remains to show that it is unique. 

To see this, first notice that again, either by tedious calculations or using CAD, the function $k(c)$ from above turns out to be strictly positive, this time on $ (\frac{1-\theta}{\theta},\infty)$. The remainder of the assertion now follows as above.

Next, let $\theta=\frac{1}{2}$. In this case, $f(1)=-\frac{\lambda}{1-\lambda}<0$ and $f(c)\uparrow \infty$ for $c \uparrow \infty$. Hence a solution to $f(c)=0$ exists on $(1,\infty)$ by the intermediate value theorem. We now show its uniqueness. Indeed, 
$$f'''(c)=\frac{1-6c+9c^2-6c^3+3c^4+c^6}{c^6} \exp\left(\frac{c^2-1}{c}\right)$$
is strictly positive on $(1,\infty)$. Since $f''(1)= -\frac{2\lambda}{1-\lambda}<0$ and $f''(c) \to 1$ for $c \to \infty$, this implies that there exists $c_0 \in (1,\infty)$ such that $f''(c) \leq 0$ on $[1,c_0]$ and $f''(c)>0$ on $(c_0,\infty)$. Combined with $f'(1)=-\frac{2\lambda}{1-\lambda}<0$ and $f'(c)\to \infty$ for $c \to \infty$, this shows that there exists $c_1 \in (c_0,\infty)$ such that $f'(c)\leq 0$ on $[1,c_1]$ and $f'(c)>0$ on $(c_1,\infty)$. Since $f(1)<0$, any solution to $f(c)=0$ must therefore lie in $(c_1,\infty)$ and is unique, since $f$ is strictly increasing there. 

Finally, consider the case $\theta>1$. Then the third derivative~$f'''(c)$ increases
on $(\tfrac{1-\theta}{\theta},0)$. This time~$k(c)$ is negative (by CAD), but so is the fraction
in front of it in~\eqref{eq:f'''}. We can now reason as above, so that the proof is complete.
\end{proof}

\begin{proof}[of Lemma~\ref{lem:s}]
For $\theta=\frac{1}{2}$, this follows immediately from Lemma~\ref{l:existence}, which yields $c \ge 1$ and
$$\bar{s}=\exp\left(\frac{c^2-1}{c}\right)=\frac{1}{1-\lambda} c^2 >1.$$

For $\theta \in (0,1) \backslash\{\frac{1}{2}\}$, one easily shows by CAD that
\[
 \frac{c}{(2\theta-1+2c\theta)(2-2\theta-c(2\theta-1))} \in
 \begin{cases}
   (0,1) & \text{if}\ \theta \in (0,\tfrac12) \cup (1,\infty), \\
   (1,\infty) & \text{if}\ \theta \in (\tfrac12,1),
 \end{cases}
\]
hence~$\bar{s}$ is well-defined. Moreover, $c/\bar{s}$ is positive
for $\theta\in(0,1)$, since $c>0$ and $\bar{s}>0$.

Finally, let $\theta>1$. Then clearly $c/\bar{s}<0$, and it remains to show
that $c/\bar{s}>-1$. For $c\in(\tfrac{1-\theta}{\theta},0)$, we have 
\[
  0 < -(2-2\theta-c(2\theta-1)) < 2\theta - 1 +2c\theta.
\]
Hence, by~\eqref{eq:s},
  \begin{align*}
   c/\bar{s} &= - (-c)^{1-\tfrac{1}{2\theta-1}}
      \left( -(2\theta - 1 +2c\theta)(2-2\theta-c(2\theta-1)) \right)^{\tfrac{1}{2\theta-1}} \\
    & > - (-c)^{1-\tfrac{1}{2\theta-1}} (2\theta - 1 +2c\theta)^{\tfrac{2}{2\theta-1}} =: -h(c).
  \end{align*}
  We have to show that $h(c)<1$ for $c\in(\tfrac{1-\theta}{\theta},0)$. By a discussion similar
  to the proof of Lemma~\ref{l:existence}, the function~$h$ has a unique maximum at
  \[
    c_0 = -\tfrac{(2\theta-1)(\theta-1)}{2\theta^2} \in(\tfrac{1-\theta}{\theta},0),
  \]
  and the value of~$h$ at~$c_0$ satisfies
  \begin{align*}
    h(c_0)^{\theta-1/2} &= \left( 1 - \tfrac{3}{2\theta} + \tfrac{1}{2\theta^2} \right)^{\theta-1}
      \left( 2-\tfrac{1}{\theta} \right) \\
    &= \exp((\theta-1)\log( 1 - \tfrac{3}{2\theta} + \tfrac{1}{2\theta^2} )) \left( 2-\tfrac{1}{\theta} \right) \\
    &< \exp((\theta-1)\left( -\tfrac{3}{2\theta} \right)) \left( 2-\tfrac{1}{\theta} \right).
  \end{align*}
  The last quantity has a negative derivative w.r.t.~$\theta$, and equals~$1$ for $\theta=1$.
  Hence it is smaller than~$1$ for $\theta>1$, so that $h(c_0)<1$, which completes the proof.
\end{proof}

\bibliographystyle{acm}
\bibliography{functionalshadow}

\end{document}